\newcounter{treeline}
\let\arrowvert\vert
\let\Arrowvert\Vert
\renewcommand{\section}{\scr@startsection
  {section}%
  {1}%
  {0em}%
  {-\baselineskip}%
  {0.5\baselineskip}%
  {\centering\normalfont\Large\scshape\mdseries}}%
\renewcommand{\subsection}{\scr@startsection
  {subsection}%
  {2}%
  {0em}%
  {-\baselineskip}%
  {0.5\baselineskip}%
  {\normalfont\large\scshape\mdseries}}%
\renewcommand*\env@matrix[1][c]{\hskip -\arraycolsep
  \let\@ifnextchar\new@ifnextchar
  \array{*\c@MaxMatrixCols #1}}
\newenvironment{theopargself*}
    {\def\@spopargbegintheorem##1##2##3##4##5{\trivlist
         \item[\hskip\labelsep{##4##1\ ##2}]{\hspace*{-\labelsep}##4##3\@thmcounterend}##5}
     \def\@Opargbegintheorem##1##2##3##4{##4\trivlist
         \item[\hskip\labelsep{##3##1}]{\hspace*{-\labelsep}##3##2\@thmcounterend}}}{}
\def \@floatboxreset {%
        \reset@font
        \small
        \@setnobreak
        \@setminipage
}
\def\figure{\@float{figure}}
\def\table{\@float{table}}
\def\fps@figure{htbp}
\def\fps@table{htbp}
\renewcommand{\thetable}{\thesection.\arabic{table}}
\theoremstyle{plain}
\newtheorem{theorem}{Theorem}[section]
\newtheorem{proposition}{Proposition}[section]
\newtheorem{corollary}{Corollary}[section]
\newtheorem{lemma}{Lemma}[section]
\newtheorem{definition}{Definition}[section]
\newtheoremstyle{break}
  {9pt}
  {9pt}
  {\itshape}
  {}
  {\bfseries}
  {.}
  {\newline}
  {}
\newtheoremstyle{break1}
  {9pt}
  {9pt}
  {\rmfamily}
  {}
  {\scshape}
  {.}
  {\newline}
  {}
\theoremstyle{break}
\newtheoremstyle{note}
  {3pt}
  {3pt}
  {}
  {}
  {\itshape}
  {:}
  {.5em}
  {\newline}  
  {}
\theoremstyle{note}
\theoremstyle{definition}
\newtheorem{example}{Example}[section]
\theoremstyle{break1}
\begin{document}
\bibliographystyle{plainnat} 
\pdfbookmark[0]{Polynomial-Time Algorithms for Computing the Nucleolus: An Assessment}{tit}
\title{{Polynomial-Time Algorithms for Computing the Nucleolus: An Assessment} }
\author{{\bfseries Holger I. MEINHARDT}~\thanks{The author acknowledges support by the state of Baden-Württemberg through bwHPC. In particular, the kind and excellent technical support supplied by Hany Ibrahim, and Peter Weisbrod is acknowledged. Of course, the usual disclaimer applies.}
\thanks{Holger I. Meinhardt, Institute of Operations Research, Karlsruhe Institute of Technology (KIT), Englerstr. 11, Building: 11.40, D-76128 Karlsruhe. E-mail: \href{mailto:holger.meinhardt@partner.kit.edu}{holger.meinhardt@partner.kit.edu}, ORCID: \href{https://orcid.org/0000-0002-8990-4190}{https://orcid.org/0000-0002-8990-4190}.} 
}
\maketitle

\begin{abstract}
Recently, \citet{maggiorano:2025} claimed that they have developed a strongly polynomial-time combinatorial algorithm for the nucleolus in convex games that is based on the reduced game approach and submodular function minimization method. Thereby, avoiding the ellipsoid method with its negative side effects in numerical computation completely. However, we shall argue that this is a fallacy based on an incorrect application of the Davis/Maschler reduced game property (RGP). Ignoring the fact that despite the pre-nucleolus, other solutions like the core, pre-kernel, and semi-reactive pre-bargaining set possess this property as well. This causes a severe selection issue, leading to the failure to compute the nucleolus of convex games using the reduced games approach. In order to assess this finding in its context, the ellipsoid method of~\citet{faiglekernkuip:01} and the Fenchel-Moreau conjugation-based approach from convex analysis of~\citet{mei:13} to compute a pre-kernel element were resumed. In the latter case, it was exploited that for TU games with a single-valued pre-kernel, both solution concepts coincide. Implying that one has computed the pre-nucleolus if one has found the sole pre-kernel element of the game. Though it is a specialized and highly optimized algorithm for the pre-kernel, it assures runtime complexity of $O(n^{3})$ for computing the pre-nucleolus whenever the pre-kernel is a single point and the~\citet{faiglekernkuip:01} assumption is satisfied, which indicates a polynomial-time algorithm for this class of games.

\vspace{3em}

\noindent {\bfseries Keywords}: Transferable Utility Game, Pre-Kernel, Pre-Nucleolus, Single-Valuedness of the Pre-Kernel, Fenchel-Moreau Conjugation, Indirect Function, Runtime Complexity, Polynomial-Time Algorithm, Stability Analysis. \\

\noindent {\bfseries 2000 Mathematics Subject Classifications}: 68Q25, 90C20, 90C25, 91A12  \\
\noindent {\bfseries JEL Classifications}: C71 
\end{abstract}

\pagestyle{scrheadings}  \ihead{\empty} \chead{Polynomial-Time Algorithms for Computing the Nucleolus} \ohead{\empty}

\section{Introduction}
\label{subsec:polytimeAlgo}

Through the work of~\citet{kop:67,kohl:72,ow:74}, several universal linear programming approaches have been developed to compute the pre-nucleolus of a TU game. In this context,~\citet[pp. 146-148]{pel_sud:07} have established that at most a sequence of $(n-1)$ LPs must be iteratively solved to successfully find the pre-nucleolus of the TU game, whereas each of these LPs has exponentially many constraints. Thus, computing the nucleolus through this method requires solving a sequence of linear programs, which has exponential complexity in the number of players $n$ due to the number of $2^{n}$ possible coalitions. Instead of directly approaching this large number of constraints, it is wiser to use an indirect method in the computation of the nucleolus to drastically reduce the order of the constraint set by reformulating the initial problem. In the sequel, we shall present and assess some polynomial-time algorithms that have the claim for indirectly computing the nucleolus for certain game classes.

\citet{faiglekernkuip:01} provided a polynomial-time algorithm for computing the nucleolus for the class of games having a sole intersection point of the pre-kernel with the core. Their proposed computation procedure relies on the ellipsoid method to solve the associated least-cores and on Maschler's scheme for approximating the (pre-)kernel. In this respect, sufficiently many steps of Maschler's approximation scheme must be performed to find suitable hyperplanes for applying the ellipsoid algorithm. Note that their polynomial-time result is critical based on the assumption that the minimum excess of any allocation can be efficiently computed, i.e., by a polynomial-time method. This is owed to the fact that the computation of the nucleolus is, in general, {\bfseries NP}-hard, as it has been worked out by~\citet{faiglekernkuip:98}. Though these authors deal with the special class of minimal-cost spanning-tree games. 

By this methodological approach, the general {\bfseries NP}-hardness of the nucleolus computation is deduced from a representative problem. Establishing {\bfseries NP}-hardness from this class of games to the entire class of TU games, therefore, imposes no loss of generality. To see this, we have to recognize that the problem of computing the nucleolus of a minimum-cost spanning-tree game has an equivalent structure to the problem of computing the nucleolus for a general TU game. Hence, the problem of computing the nucleolus of a minimum-cost spanning-tree game is representative of solving the nucleolus of a general TU game. Now,~\citeauthor{faiglekernkuip:98} prove that the representative problem for computing the nucleolus has a reduction to a minimum cover game. It is well-known that finding a minimum cover is {\bfseries NP}-hard. Finally, they show that solving the representative problem for the nucleolus is at least as difficult to solve as a minimum cover game, which is from the complexity class {\bfseries NP}. This argument is sufficient to conclude that computing the nucleolus is, in general, {\bfseries NP}-hard.    

However, to avoid the well-known problem of unbounded complexity in a real number model related to the ellipsoid algorithm,~\citet{faiglekernkuip:01} imposed a rational number model instead. The unbounded complexity in a real number model implies numerical instability issues with floating-point arithmetic, causing the ellipsoid algorithm to fail or converge to an incorrect solution. These negative side effects crucially limit the practical relevance of computing the nucleolus through the ellipsoid method. Nevertheless, their finding must be understood as a major theoretical breakthrough in proving polynomial-time solvability of the nucleolus for a relatively large class of games, and not just for special game classes such as assignment games (cf.~\citet{solrag:94}), matching games (cf.~\citet{wkern:03}), or standard tree games (cf.~\citet{megiddo:78,granotmasch:96}).

Hence, there is still a strong need for a universally valid practical method in the computation of the nucleolus by a polynomial-time algorithm that overcomes all these negative side effects. \citet{maggiorano:2025} recently claimed that they were able to close this gap, at least in the domain of convex games. In particular, these authors claimed to provide a first strongly polynomial-time algorithm for computing the nucleolus of convex games while leveraging the Davis/Maschler reduced game approach in combination with submodular function minimization. In this context, the submodular function minimization is applied in order to catch the nucleolus by the least-core of the game. This avoids the ellipsoid method with its negative side effects in numerical computation completely for solving the linear programming problems related to the least-core of the default game and the emerging Davis/Maschler reduced games. However, we shall argue that this is a fallacy based on an incorrect application of the Davis/Maschler reduced game property (RGP). Ignoring the fact that despite the pre-nucleolus, also other solutions like the core, pre-kernel (cf.~\citet{Pel:86b}), and semi-reactive pre-bargaining set (cf.~\citet{SudPot:01}) possess this property, whereas the list of mentionable solution concepts is not complete. This causes a severe selection issue, leading to the failure to compute the nucleolus of convex games using the reduced games approach. 

A feature that both methods from above have in common is that they leverage the least-core as the catcher of the nucleolus. At the cost that the problem must be overloaded with too much structure, that is not necessary for efficiently computing the nucleolus of the game. A structure that is imposed by the original characterization of the nucleolus, which is inherently {\bfseries NP}-hard. It should be clear from this that without a substantial simplification of the nucleolus characterization, it will not be possible to develop a universal polynomial-time algorithm for calculating the nucleolus. The work of~\citet{kido:04,kido:05,kido:08} has pointed out an interesting direction in which the journey towards a polynomial-time solvability of the nucleolus should go without referring to any form of nucleolus catcher. Unfortunately, it is a widely unrecognized work in the literature despite its theoretical elegance and power as a tool for computing the nucleolus. 

The underlying idea was only taken up again by~\citet{mei:13}, not for the calculation of the nucleolus, but for the calculation of the pre-kernel as a solution concept used. In this framework, a Fenchel-Moreau conjugation-based approach from convex analysis was applied to compute a pre-kernel element. While the pre-nucleolus is always a single point within the pre-kernel, both solution concepts are just for some special game classes identical, like 3-person games, convex games, veto-rich games, or almost-convex games. Thus, if the pre-kernel is identical to the pre-nucleolus, one has computed the pre-nucleolus if one has found the sole pre-kernel element of the game. This fact is exploited when applying the Fenchel-Moreau approach to calculate the pre-nucleolus of a game with a single-valued pre-kernel.

This procedure offers several theoretical and computational advantages over general-purpose methods like the ellipsoid method by reformulating the problem in a highly optimized way. In particular, no nucleolus catcher was needed for imposing its successful computation. Moreover, due to the fact that the initial problem is formulated as an overdetermined system of equations, an equivalent optimization problem can be constructed to iteratively solve a sequence of linear equations, which does scale extremely well. Hence, it is a practical method for computing this single point. Also, polynomial-time solvability is guaranteed when a similar assumption as under~\citet{faiglekernkuip:01} is imposed; then the algorithm reduces to iteratively solving a system of linear equations with a worst-case runtime complexity of $O(n^3)$. In addition, through the induced iterative process of orthogonal projections, a fast convergence is guaranteed. Then, this method converges in a finite, small number of steps; typically just $(n+1)$ iteration steps are needed to complete the computation. Thus, applying this approach in the nucleolus computation is more efficient because of its specialized, elegant, and highly optimized algorithm for the pre-kernel.

These findings and their implications were completely ignored in the search for a polynomial-time algorithm for the nucleolus. Instead, the authors resorted to outdated ideas like nucleolus catchers in a new formal framework, which require cumbersome constructions to ensure successful nucleolus computation. This is absolutely counterproductive in conceiving a truly efficient method for computing the nucleolus in general. By contrast, the Fenchel-Moreau-based approach illustrates the potential of a substantial simplification of the characterization of the nucleolus for the class of games with a single-valued pre-kernel in order to conceive a universal polynomial-time algorithm for the nucleolus.

The remainder of the paper is organized as follows: In the Section~\ref{sec:prel} we introduce some basic notations and definitions to investigate the coincidence of the pre-kernel with the pre-nucleolus. Then Section~\ref{sec:dprk} provides the concept of the indirect function and gives a dual pre-kernel representation in terms of solution sets. In the next step, the notion of lexicographically smallest most effective coalitions is introduced in order to identify payoff equivalence classes on the domain of the objective function from which a pre-kernel element can be determined. Moreover, to keep the forthcoming assessment of a sample of algorithms for computing the nucleolus in polynomial runtime complexity as self-explanatory as possible, we sketch in Section~\ref{sec:effsolvprob} some characterizations of efficient solvability of a problem. For then, to turn to the ellipsoid method of computing the nucleolus for some classes of games in polynomial-time in Section~\ref{sec:methellip}. After that we provide a sketch of the ellipsoid method (Section~\ref{sec:skellip}) and of Maschler's scheme for approximating the (pre-)kernel of a game (Section~\ref{sec:approxkr}). Having clarified their main features, we are in a position to turn our attention to the runtime complexity of the ellipsoid method for the rational (Section~\ref{sec:elliprat}) and real number model (Section~\ref{sec:ellipreal}) before looking at its theoretical and practical limitations in Section~\ref{sec:unbcomp}. Then, we look at two recent methods of computing the nucleolus that avoid the ellipsoid method. The first one (Section~\ref{sec:rgpsfm}) is based on a submodular function minimization to enclose the nucleolus again within the least-core as its nucleolus catcher to impose the Davis/Maschler reduced-game property of trying to single out the nucleolus for convex games. By contrast, the second one (Section~\ref{sec:mthdindfuc}) is based on our Fenchel-Moreau conjugation method for finding a pre-kernel element, for which we establish its usefulness for certain game classes in the computation of the nucleolus by means of polynomial runtime complexity. Finally, we focus in Section~\ref{sec:repprk} on the replication of the nucleolus in the sense of~\citet{mei:23} for a class of related games to demonstrate the flexibility of the Fenchel-Moreau-based algorithm. This is underpinned by providing a manual step-by-step procedure for computing the nucleolus for a convex game borrowed from the literature. In the Section~\ref{sec:concrem}, our assessment of a sample of polynomial-time algorithms for the nucleolus is closed by some concluding remarks.

\section{Some Preliminaries}
\label{sec:prel}
A cooperative game with transferable utility is a pair $\langle N,v \rangle $, where $N$ is the non-empty finite player set $N := \{1,2, \ldots, n\}$, and $v$ is the characteristic function $v: 2^{N} \rightarrow \mathbb{R}$ with $v(\emptyset):=0$. A player $i$ is an element of $N$, and a coalition $S$ is an element of the power set of $2^{N}$. The real number $v(S) \in \mathbb{R}$ is called the value or worth of a coalition $S \in 2^{N}$. Let $S$ be a coalition, the number of members in $S$ will be denoted by $s:=|S|$. We assume throughout that $v(N) > 0$ and $n \ge 2$ is valid. In addition, we identify a cooperative game by the vector $v := (v(S))_{S \subseteq N} \in \mathcal{G}^{n} = \mathbb{R}^{2^{n}}$, if no confusion can arise. Finally, the relevant game space for our investigation is defined by $\mathcal{G}(N) := \{v \in \mathcal{G}^{n}\,\arrowvert\, v(\emptyset) = 0 \land v(N) > 0\}$.  

If $\mathbf{x} \in \mathbb{R}^{n}$, we apply $x(S) := \sum_{k \in S}\, x_{k}$ for every $S \in 2^{N}$ with $x(\emptyset):=0$. The set of vectors $\mathbf{x} \in \mathbb{R}^{n}$ which satisfies the efficiency principle $v(N) = x(N)$ is called the {\bfseries pre-imputation set} and it is defined by 
\begin{equation} 
  \label{eq:pre-imp}
  \mathcal{I}^{*}(N,v):= \left\{\mathbf{x} \in \mathbb{R}^{n} \;\arrowvert\, x(N) = v(N) \right\}, 
\end{equation} 
or more concisely as  $\mathcal{I}^{*}(v)$, where an element $\mathbf{x} \in \mathcal{I}^{*}(v)$ is called a pre-imputation. The set of pre-imputations which satisfies in addition the {\bfseries individual rationality property} $x_{k} \ge v(\{k\})$ for all $k \in N$ is called the {\bfseries imputation set} $\mathcal{I}(N,v)$.

A vector that results from a vector $\mathbf{x}$ by a {\bfseries transfer}\index{transfer} of size $\delta \ge 0$ between a pair of players $i, j \in N, i \neq j$, is referred to as $\mathbf{x}^{\;i,j,\delta} = (x^{\;i,j,\delta}_{k})_{k \in N}$, which is given by
\begin{equation}
 \label{eq:sidepyA}
  \mathbf{x}^{\;i,j,\delta}_{N\backslash\{i,j\}} = \mathbf{x}_{N\backslash\{\;i,j\}},\; x^{i,j,\delta}_{i} = x_{i} - \delta\quad\text{and}\quad x^{\;i,j,\delta}_{j} = x_{j} + \delta.
\end{equation}
A {\bfseries side-payment}\index{side-payment} for the players in $N$ is a vector $\mathbf{z} \in \mathbb{R}^{n}$ such that $z(N) = 0$ holds. 

A {\bfseries solution concept}, denoted as $\sigma$, on a non-empty set $\mathcal{G}$ of games is a correspondence on $\mathcal{G}$ that assigns to any game $v \in \mathcal{G}$ a subset $\sigma(N,v)$ of $\mathcal{I}^{*}(N,v)$\index{solution concept!set}. This set can be empty or just be single-valued, in the latter case, the solution $\sigma$ is a function and is simply called a value.

The {\bfseries core} of a game $\langle N, v \rangle$ is a set-valued solution that is constituted by the imputations satisfying besides the individual rationality property as well as the coalitional rationality property, i.e. the core of a game $v \in \mathcal{G}^{n}$ is given by 
\begin{equation}
 \label{eq:core_solA}
  \mathcal{C}(N,v):= \left\{\mathbf{x} \in \mathcal{I}(N,v)\,\arrowvert\, x(N) = v(N)\;\text{and}\; x(S) \ge v(S) \;\forall\;  S \subset N\right\}.
\end{equation}
The core of a $n$-person game may be empty. Whenever it is non-empty we have some incentive for mutual cooperation in the grand coalition.

Given a vector $\mathbf{x} \in \mathcal{I}^{*}(v)$, we define the {\bfseries excess} of coalition $S$ with respect to the pre-imputation $\mathbf{x}$ in the game $\langle N,v \rangle $ by 
\begin{equation} 
  \label{eq:exc} 
  e^{v}(S,\mathbf{x}):= v(S) - x(S). 
\end{equation} 

Take a game $v \in \mathcal{G}^{n}$. For any pair of players $i,j \in N, i\neq j$, the {\bfseries maximum surplus} of player $i$ over player $j$ with respect to any pre-imputation $\mathbf{x} \in \mathcal{I}^{*}(v)$ is given by the maximum excess at $\mathbf{x}$ over the set of coalitions containing player $i$ but not player $j$, thus\begin{equation} 
  \label{eq:maxexc} 
  s_{ij}(\mathbf{x},v):= \max_{S \in \mathcal{G}_{ij}} e^{v}(S,\mathbf{x}) \qquad\text{where}\;  \mathcal{G}_{ij}:= \{S \;\arrowvert\; i \in S\; \text{and}\; j \notin S \}. 
\end{equation} 
The set of all pre-imputations $\mathbf{x} \in \mathcal{I}^{*}(v)$ that balances the maximum surpluses for each distinct pair of players $i,j \in N, i\neq j$ is called the~\hypertarget{hyp:prk}{{\bfseries pre-kernel}} of the game $v$, and is defined by 
  \begin{equation} 
    \label{eq:prekA} 
    \mathcal{PK}(v) := \left\{ \mathbf{x} \in \mathcal{I}^{*}(v)\; \arrowvert\;  s_{ij}(\mathbf{x},v) = s_{ji}(\mathbf{x},v) \quad\text{for all}\; i,j \in N, i\neq j \right\}. 
  \end{equation} 
The pre-kernel has the advantage of addressing a stylized bargaining process, in which the figure of argumentation is a {\bfseries pairwise equilibrium procedure} of claims while relying on best arguments, that is, the coalitions that will best support the claim. The pre-kernel solution characterizes all those imputations in which all pairs of players $i,j \in N, i\neq j$ are in equilibrium with respect to their claims.

In order to define the pre-nucleolus of a game $v \in \mathcal{G}^{n}$, take any $\mathbf{x} \in \mathbb{R}^{n}$ to define a $2^{n}$-tuple vector $\theta(\mathbf{x})$ whose components are the excesses $e^{v}(S,\mathbf{x})$ of the $2^{n}$ coalitions $S \subseteq N$, arranged in decreasing order, that is,
\begin{equation}
 \label{eq:compl_vec}
  \theta_{i}(\mathbf{x}):=e^{v}(S_{i},\mathbf{x}) \ge e^{v}(S_{j},\mathbf{x}) =:\theta_{j}(\mathbf{x}) \qquad\text{if}\qquad 1 \le i \le j \le 2^{n}.
\end{equation}
Ordering the so-called complaint or dissatisfaction vectors $\theta(\mathbf{x})$ for all $\mathbf{x} \in \mathbb{R}^{n}$ by the lexicographic order  $\le_{L}$ on $\mathbb{R}^{n}$, we shall write
\begin{equation}
 \theta(\mathbf{x}) <_{L} \theta(\mathbf{y}) \qquad\text{if}\;\exists\;\text{an integer}\; 1 \le k \le 2^{n},
\end{equation}
such that $\theta_{i}(\mathbf{x}) = \theta_{i}(\mathbf{y})$ for $1 \le i < k$ and $\theta_{k}(\mathbf{x}) < \theta_{k}(\mathbf{y})$. Furthermore, we write $\theta(\mathbf{x}) \le_{L} \theta(\mathbf{y})$ if either $\theta(\mathbf{x}) <_{L} \theta(\mathbf{y})$ or $\theta(\mathbf{x}) = \theta(\mathbf{y})$. Now the pre-nucleolus $\mathcal{PN}(N,v)$ over the pre-imputations set $\mathcal{I}^{*}(v)$ is defined by 
\begin{equation}
 \label{eq:prn_sol}
  \mathcal{PN}(N,v) = \left\{\mathbf{x} \in \mathcal{I}^{*}(N,v)\; \arrowvert\; \theta(\mathbf{x}) \le_{L} \theta(\mathbf{y}) \;\forall\; \mathbf{y} \in \mathcal{I}^{*}(N,v) \right\}.
\end{equation}
The {\bfseries pre-nucleolus} of any game $v \in \mathcal{G}^{n}$ is non-empty as well as unique, and it is referred to as $\nu(v)$ if the game context is clear from the contents or $\nu(N,v)$ otherwise. 

Moreover, both solutions can be uniquely characterized by a set of axioms. In order to formalize such an axiomatization, let $\langle\, N, v\,\rangle \in \mathcal{G}$ be a game s.t.~$\emptyset \neq S \subseteq N$ and let $\vec{x} \in \mathcal{I}^{*}(N,v)$. The {\bfseries Davis/Maschler reduced game} w.r.t.~$S$ and $\vec{x}$ is the game $\langle\, S, v_{S,\,\vec{x}}\,\rangle$ as given by
\begin{equation}
 \label{eq:rdg_modic}
  v_{S,\,\vec{x}}(T):= 
  \begin{cases}
    0 & \text{if}\; T = \emptyset \\
    v(N)-x(N\backslash S) & \text{if}\; T = S \\
    \max_{Q \subseteq N\backslash S}\,\left(v(T \cup Q) - x(Q) \right) & \text{otherwise.}
  \end{cases}
\end{equation}
This game type has been introduced by~\citet{davis:65} to study the kernel.

\begin{definition}[DM-RGP]
 \label{def:hm_rgp0}
A solution $\sigma$ on $\mathcal{G}$ satisfies the Davis/Maschler reduced game property (RGP), if for $\langle N, v \rangle  \in \mathcal{G}, \emptyset \neq S \subseteq N$ and $\mathbf{x} \in \sigma(N,v)$, then $\langle\, S, v_{S,\,\mathbf{x}}\,\rangle  \in \mathcal{G}$ and $\mathbf{x}_{S} \in \sigma(S,v_{S,\,\mathbf{x}})$.  
\end{definition}

Let $\sigma$ be a solution concept on the set $\mathcal{G}$, and $\mathcal{U}$ the universe of players.   In addition, define the permutation group by $\text{Sym}(N):=\{\vartheta: N \rightarrow N \,\arrowvert \vartheta \;\text{is bijective} \}$ acting on the game space $\mathcal{G}$ by linear transformations. Hence, each bijection $\vartheta \in  \text{Sym}(N)$ corresponds to a linear and invertible transformation of an element of the vector space $\mathcal{G}$ by defining a permuted game $\vartheta \, v(S):=v(\vartheta^{-1}\, S)$ for every $\vartheta \in \text{Sym}(N), v \in \mathcal{G}$ and $S \subseteq N$, whereas $\vartheta\, S := \{\vartheta(i) \,\arrowvert\, i \in S \}$ and  $\vartheta^{-1}\, S := \{i \,\arrowvert\, \vartheta(i) \in S \}$. Hence, the games $\langle\, N, v\,\rangle$ and $\langle\, \vartheta N, \vartheta v\,\rangle$ are equivalent, where we have written the group operations for the sake of convenience as junction.  

  \begin{enumerate}
  \item A solution $\sigma$ on $\mathcal{G}$ satisfying {\bfseries non-emptiness (NE)} if $\sigma(N,v) \neq \emptyset$ for every $\langle\, N, v\,\rangle  \in \mathcal{G}$.
  \item A solution $\sigma$ on $\mathcal{G}$ fulfills the {\bfseries individual rationality property (IRP)}  if for $\langle\, N, v\,\rangle \in \mathcal{G}$ and $\vec{x} \in \sigma(N,v)$, then $x_{k} \ge v(\{k\})$ for all $k \in N$.     
  \item A solution $\sigma$ on $\mathcal{G}$ is {\bfseries single-valued (SIVA)} if $\arrowvert\,\sigma(N,v)\, \arrowvert = 1$ for every $\langle\, N, v\,\rangle  \in \mathcal{G}$.
  \item A solution $\sigma$ on $\mathcal{G}$ satisfies {\bfseries efficiency (EFF)} if for all $\langle\, N, v\,\rangle  \in \mathcal{G}$, then $\sum_{k \in N}\,\sigma_{k}(N,v) = v(N) $.  
  \item A solution $\sigma$ on $\mathcal{G}$ satisfies the {\bfseries equal treatment property (ETP)}, if $\langle\, N, v\,\rangle \in \mathcal{G}$, $\vec{x} \in \sigma(N,v)$ and if $k,l \in N$ s.t. $k \sim_{v} l$, then $x_{k} = x_{l}$.
  \item A solution $\sigma$ on $\mathcal{G}$ satisfies {\bfseries anonymity (AN)} if for $\langle\, N, v\,\rangle  \in \mathcal{G}$, for a bijection $\vartheta \in  \text{Sym}(N)$ and for $\langle\, \vartheta N, \vartheta v\,\rangle \in \mathcal{G}$ implying $\sigma(\vartheta N, \vartheta v) = \vartheta(\sigma(N,v))$.
  \item A solution $\sigma$ on $\mathcal{G}$ satisfies {\bfseries superadditivity (SUPA)} if for all $\langle\, N, v_{1}\,\rangle, \langle\, N, v_{2}\,\rangle, \langle\, N, v_{1}+v_{2}\,\rangle \in \mathcal{G}$, then $\sigma(v_{1}+v_{2})  \supseteq \sigma(v_{1}) + \sigma(v_{2})$ is satisfied.  
  \item A solution $\sigma$ on $\mathcal{G}$ fulfills the {\bfseries Covariance under Strategic Equivalence (COV)} property if for $\langle\, N, v_{1}\,\rangle, \langle\, N, v_{2}\,\rangle \in \mathcal{G}$, with $v_{2} = t \cdot v_{1} + \mathfrak{m} $ for some $t \in \mathbb{R}\backslash\{0\}, \mathfrak{m} \in \mathbb{R}^{2^{n}}$, then $\sigma(N,v_{2}) = t \cdot \sigma(N,v_{1}) + \mathbf{m}$, whereas $\mathbf{m} \in \mathbb{R}^{n}$ and $\mathfrak{m}$ is the vector of measures obtained from $\mathbf{m}$.
 \item A solution $\sigma$ on $\mathcal{G}$ possesses the {\bfseries converse reduced game property (CRG)} property if for $\langle\, N, v\,\rangle  \in \mathcal{G}$ with $|N| \ge 2$, $\vec{x} \in \mathcal{I}^{*}(N,v)$, $\langle\, S, v_{S,\,\vec{x}}\,\rangle  \in \mathcal{G}$ and $\vec{x}_{S} \in \sigma(S,v_{S,\,\vec{x}})$ for every $S \in \{T \subseteq N\,\arrowvert\, |T|=2 \}$, then $\vec{x} \in \sigma(N,v)$.
  \end{enumerate}

\begin{theorem}[\citet{Sob:75}]
  \label{thm:sob}
If $\mathcal{U}$ is an infinite player set, then there exists a unique solution $\sigma$ on $\mathcal{G}_{\mathcal{U}}$ satisfying single-valuedness (SIVA), anonymity (AN), covariance under strategic equivalence (COV), and reduced game property (RGP), which is the pre-nucleolus. 
\end{theorem}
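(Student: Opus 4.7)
The plan is to proceed in two halves: first verify that the pre-nucleolus $\nu$ itself satisfies the four axioms (SIVA), (AN), (COV), (RGP), and then show that any solution $\sigma$ on $\mathcal{G}_{\mathcal{U}}$ obeying these axioms must coincide with $\nu$ on every game. Existence of a solution with the listed properties is then covered by $\nu$ itself, so the theorem is fundamentally a uniqueness statement.

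For the first half I would check the axioms one at a time using only elementary properties of the lexicographic order on excess vectors. Non-emptiness and single-valuedness of $\nu$ are standard consequences of compactness of the relevant level sets together with the strict quasi-convex nature of lex-minimization in a bounded direction of $\mathcal{I}^{*}(N,v)$. Anonymity is immediate: a permutation $\vartheta$ of $N$ merely relabels the coalitions in the unordered multiset $\{e^{v}(S,\mathbf{x})\}_{S}$, so the lex-minimal point is transported covariantly by $\vartheta$. Covariance follows because $e^{tv+\mathfrak{m}}(S, t\mathbf{x}+\mathbf{m}) = t\cdot e^{v}(S,\mathbf{x})$ for $t>0$, which rescales $\theta$ coordinatewise and so commutes with the lex-minimization. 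The reduced game property is the substantive verification: given $\mathbf{x}=\nu(N,v)$ and $\emptyset\neq S\subseteq N$, one observes that every excess $e^{v_{S,\mathbf{x}}}(T,\mathbf{x}_{S})$ equals some excess $e^{v}(T\cup Q,\mathbf{x})$ with $Q\subseteq N\setminus S$, so any perturbation of $\mathbf{x}_{S}$ inside $\mathcal{I}^{*}(S,v_{S,\mathbf{x}})$ lifts to a feasible perturbation of $\mathbf{x}$ in $\mathcal{I}^{*}(N,v)$ that would lexicographically decrease $\theta(\cdot)$, contradicting the optimality of $\mathbf{x}$.

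For uniqueness I would follow Sobolev's embedding argument. Fix $\langle N,v\rangle\in\mathcal{G}_{\mathcal{U}}$ and, applying (COV), reduce to the case where $\nu(N,v)=\mathbf{0}$, so that the goal becomes $\sigma(N,v)=\{\mathbf{0}\}$. Exploiting that $\mathcal{U}$ is infinite, one selects for every non-trivial coalition $S\subset N$ a disjoint clone set $N_{S}\subseteq\mathcal{U}\setminus N$ of carefully chosen cardinality and forms the enlarged player set $M:=N\cup\bigsqcup_{S}N_{S}$. One then constructs an auxiliary game $\langle M,w\rangle$ with three features: (i) the members of each $N_{S}$ are mutual substitutes in $w$, so that (AN) combined with (SIVA) forces $\sigma(M,w)$ to give them a common coordinate; (ii) the remaining symmetries of $w$, together with efficiency on $M$, pin the only $\sigma$-admissible allocation down to one that assigns $0$ to every player of $N$; (iii) the Davis/Maschler reduction of $\langle M,w\rangle$ to $N$ with respect to $\sigma(M,w)$ returns exactly $\langle N,v\rangle$. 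With these in place, (RGP) gives $\sigma(M,w)_{N}\in\sigma(N,w_{N,\sigma(M,w)})=\sigma(N,v)$, and single-valuedness yields $\sigma(N,v)=\{\mathbf{0}\}=\{\nu(N,v)\}$, as required.

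The main obstacle is the construction of $\langle M,w\rangle$ in step three of the uniqueness argument. The cardinalities $|N_{S}|$ and the worths $w(T)$ for $T\subseteq M$ must be calibrated simultaneously so that the reduced game recovers $v$ exactly while still exposing enough symmetry for (AN) to act on the clones; the infinite-universe assumption is indispensable precisely because the required clone sets grow with the combinatorial complexity of $v$ and must fit inside $\mathcal{U}\setminus N$. Each axiom enters at a distinct and necessary point of the chain: (COV) performs the normalization, (SIVA) turns $\sigma(M,w)$ into a single point one can reason about, (AN) equalizes coordinates on clone orbits, and (RGP) transports the conclusion back to $\langle N,v\rangle$. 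Dropping any one of them breaks the argument, which is also why the theorem is sharp and why independence of the axioms can be exhibited by standard counter-examples.
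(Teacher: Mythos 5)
First, note that the paper itself offers no proof of this theorem: it is stated as a cited result of \citet{Sob:75}, and the surrounding text only uses it to observe that RGP alone does not single out the pre-nucleolus. Your proposal therefore has to be judged as a free-standing argument, and as such it is an accurate road map of Sobolev's classical proof rather than a proof. The first half (verifying that $\nu$ satisfies SIVA, AN, COV and RGP) is essentially right in outline, although the RGP step as you describe it hides the real work: an excess in the reduced game is a \emph{maximum} over the sets $T\cup Q$ with $Q\subseteq N\setminus S$, so a single reduced-game excess corresponds to a whole family of excesses upstairs, and showing that a lexicographic improvement of $\theta(\cdot)$ downstairs lifts to one upstairs requires comparing the two ordered excess vectors term by term (or invoking Kohlberg's balancedness criterion). ``Lifts to a feasible perturbation that would lexicographically decrease $\theta$'' is a claim, not an argument.

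The genuine gap is in the uniqueness half. You correctly identify the construction of the auxiliary game $\langle M,w\rangle$ as ``the main obstacle,'' but you then leave it entirely unexecuted, and that construction \emph{is} the theorem: one must exhibit concrete cardinalities $|N_{S}|$ and worths $w(T)$ such that (a) $w$ is invariant under a permutation group acting transitively enough that AN and SIVA force the solution to be constant on the relevant orbits, and (b) the Davis/Maschler reduction of $w$ to $N$ at that forced allocation returns $v$ exactly. Verifying (b) means computing $\max_{Q\subseteq M\setminus N}\,(w(T\cup Q)-x(Q))$ for every $T\subseteq N$ and checking that it equals $v(T)$, which is where all the calibration effort lies; nothing in your sketch shows this is achievable. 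A second, smaller gap: in step (ii) you invoke ``efficiency on $M$,'' but EFF is not among the hypotheses. It must first be \emph{derived} from SIVA, COV and RGP by passing to one-person reduced games (COV applied to the zero one-person game, which equals twice itself, forces $\sigma(\{i\},0)=\{0\}$, and RGP then yields $x(N)=v(N)$). Without that preliminary lemma the symmetry argument on $\langle M,w\rangle$ does not pin the allocation down. As it stands, your text is a correct description of where a proof would have to go, with the two hardest steps asserted rather than proved.
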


By the above theorem, we recognize that the {\bfseries RGP} axiom is just one axiom among others to characterize the pre-nucleolus. It does not characterize the pre-nucleolus alone. Moreover, recall that the pre-kernel is characterized by the following set of logically independent axioms: {\bfseries NE, EFF, COV, ETP, RGP}, and {\bfseries CRGP} (cf.~\citet[Section 5.4]{pel_sud:07}). Hence, we observe that even the pre-kernel is characterized by the {\bfseries RGP} axiom. By contrast, the core is characterized on the set of balanced games by the following three logically independent axioms: {\bfseries IRP, SUPA}, and {\bfseries RGP} (cf.~\citet[Section 5]{Pel:86b}). Again, we observe that the {\bfseries RGP} axiom can also be attributed to the core of balanced games. Therefore, the {\bfseries RGP} is not an axiom that exclusively characterizes the pre-nucleolus.  

In addition, we want to discuss some important game properties. A game $v \in \mathcal{G}^{n}$ is said to be {\bfseries monotonic} if 
\begin{equation}
 \label{eq:mono_propA}
  v(S) \le v(T) \qquad\forall \emptyset \neq S \subseteq T.
\end{equation}
Thus, whenever a game is monotonic, a coalition $T$ can guarantee to its member a value at least as high as  any sub-coalition $S$ can do. This subclass of games is referred to as $\mathcal{MN}^{n}$. A game $v \in \mathcal{G}^{n}$ satisfying the condition
\begin{equation}
 \label{eq:sadd_propA}
  v(S) + v(T) \le v(S \cup T) \qquad\forall S,T \subseteq N,\;\text{with}\; S \cap T = \emptyset,
\end{equation}
is called {\bfseries superadditive}. This means, that two disjoint coalitions have some incentive to join into a mutual coalition. This can be regarded as an incentive of merging economic activities into larger units. We denote this subclass of games by $\mathcal{SA}^{n}$. However, if a game $v \in \mathcal{G}^{n}$ satisfies
\begin{equation}
 \label{eq:conv_propA}
  v(S) + v(T) \le v(S \cup T) + v(S \cap T)\qquad\forall S,T \subseteq N,
\end{equation}
or equivalently
\begin{equation}
 \label{eq:conv_propA2}
  v(S \cup \{i\}) - v(S)  \le v(S \cup \{i,j\} ) - v(S \cup \{j\})  \qquad\text{if} \; S \subseteq  N \backslash \{i,j\},
\end{equation}
then it is called {\bfseries convex}. In this case, we will observe a strong incentive for a mutual cooperation in the grand coalition, due to its achievable over proportionate surpluses while increasing the scale of cooperation. This subclass of games has been introduced by~\citet{Shapley:71}, and we denote it by $\mathcal{CV}$. Convex games having a non-empty core and the Shapley value is the center of gravity of the extreme point of the core (cf.~\citet{Shapley:71}), that is, a convex combination of the vectors of marginal contributions, which are core imputations for convex games. It should be evident that $\mathcal{CV}^{n} \subset \mathcal{SA}^{n}$ is satisfied.

\section{A Dual Pre-Kernel Representation}
\label{sec:dprk}

Now, we present an alternative approach to characterizing the pre-kernel in terms of the so-called indirect function, introduced by~\citet{mart:96}. The indirect function of a characteristic function game -- which is a Fenchel-Moreau generalized conjugation -- provides the same information as the $n$-person cooperative game with transferable utility under consideration. In this section, we review some crucial results extensively studied in~\citet[Chap.~5 \&~6]{mei:13} as the building blocks to investigate the computational aspect of the pre-kernel.

The {\bfseries convex conjugate} or {\bfseries Fenchel transform} $f^{*}: \mathbb{R}^{n} \to \overline{\mathbb{R}}$ (where $\overline{\mathbb{R}} := \mathbb{R} \cup \{ \pm\;\infty\}$) of a convex function $f: \mathbb{R}^{n} \to \overline{\mathbb{R}}$ (cf.~\citet[Section 12]{Rocka:70}) is defined by 
\begin{equation*} 
  f^{*}(\mathbf{x}^{\,*}) = \sup_{\mathbf{x} \in \mathbb{R}^{n}} \{\langle\; \mathbf{x}^{\,*}, \mathbf{x} \;\rangle - f(\mathbf{x})\} \qquad \forall \mathbf{x}^{\,*} \in \mathbb{R}^{n}.
\end{equation*} 
Observe that the Fenchel transform $f^{*}$ is the point-wise supremum of affine functions $p(\mathbf{x}^{\,*}) = \langle\; \mathbf{x}, \mathbf{x}^{\,*} \;\rangle - \mu$ such that $(\mathbf{x},\mu) \in (\CMcal{C} \times \mathbb{R}) \subseteq (\mathbb{R}^{n} \times \mathbb{R})$, whereas $\CMcal{C}$ is a convex set. Thus, the Fenchel transform $f^{*}$ is again a convex function.

We can generalize the definition of a Fenchel transform (cf.~\citet{mart:96}) by introducing a fixed non-empty subset $\CMcal{K}$ of $\mathbb{R}^{n}$, then the conjugate of a function $f: \CMcal{K} \to \overline{\mathbb{R}}$ is $f^{c}: \mathbb{R}^{n} \to \overline{\mathbb{R}}$, given by
\begin{equation*}
  f^{c}(\mathbf{x}^{\,*}) = \sup_{\mathbf{x} \in \CMcal{K}} \{\langle\; \mathbf{x}^{\,*}, \mathbf{x} \;\rangle - f(\mathbf{x})\} \qquad \forall \mathbf{x}^{\,*} \in \mathbb{R}^{n},
\end{equation*}
which is also known as the {\bfseries Fenchel-Moreau conjugation}.

A vector $\mathbf{x}^{\,*}$ is said to be a subgradient of a convex function $f$ at a point $\mathbf{x}$, if
\begin{equation*}
  f(\mathbf{z}) \ge f(\mathbf{x}) + \langle\; \mathbf{x}^{\,*}, \mathbf{z} - \mathbf{x} \;\rangle \qquad\forall \mathbf{z} \in \mathbb{R}^{n}.
\end{equation*}
The set of all subgradients of $f$ at $\mathbf{x}$ is called the subdifferentiable of $f$ at $\mathbf{x}$ and it is defined by
\begin{equation*}
  \partial f(\mathbf{x}):= \{ \mathbf{x}^{\,*} \in \mathbb{R}^{n}\;\arrowvert\; f(\mathbf{z}) \ge f(\mathbf{x}) + \langle\; \mathbf{x}^{\,*}, \mathbf{z} - \mathbf{x} \;\rangle \quad (\forall \mathbf{z} \in \mathbb{R}^{n})\}.
\end{equation*}
The set of all subgradients $\partial f(\mathbf{x})$ is a closed convex set, which could be empty or may consist of just one point. The multivalued mapping $\partial f: \mathbf{x} \mapsto \partial f(\mathbf{x})$ is called the subdifferential of $f$.

\begin{theorem}[Indirect Function (\citet{mart:96})]
\label{th:mart7}
The indirect function $\pi: \mathbb{R}^{n} \to \mathbb{R}$ of any $n$-person TU game is a non-increasing polyhedral convex function such that 
\begin{itemize}
\item[(i)] $\partial{\pi(\mathbf{x})}{} \cap \{-1, 0\}^{n} \neq \emptyset \qquad\forall \mathbf{x} \in \mathbb{R}^{n}$,
\item[(ii)] $ \{-1,0\}^{n} \subset \bigcup_{\mathbf{x} \in \mathbb{R}^{n}} \partial{\pi(\mathbf{x})}{}$, and
\item[(iii)] $\min_{\mathbf{x} \in \mathbb{R}^{n}}\; \pi(\mathbf{x}) = 0$.
\end{itemize}
Conversely, if $\pi: \mathbb{R}^{n} \to \mathbb{R}$ satisfies $(i)$-$(iii)$ then there exists a unique $n$-person TU game $\langle N,v \rangle$ having $\pi$ as its indirect function, its characteristic function is given by
\begin{equation}
\label{eq:mart15}
v(S) = \min_{\mathbf{x} \in \mathbb{R}^{n}}\bigg\{\pi(\mathbf{x}) + \sum_{k \in S}\; x_{k}\bigg\} \qquad\forall\; S \subseteq N.
\end{equation}
\end{theorem}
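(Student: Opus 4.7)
The plan is to identify the indirect function explicitly as
\begin{equation*}
\pi(\mathbf{x}) \;=\; \max_{S \subseteq N}\{v(S) - x(S)\}
\end{equation*}
(the Fenchel--Moreau conjugate on $\CMcal{K}=\{-1,0\}^{n}$ of the function $f(-\mathbf{1}_{S}):=-v(S)$, with $v(\emptyset)=0$), and then to establish the two directions separately. For the forward direction, $\pi$ is the pointwise maximum of $2^{n}$ affine functions with gradients $-\mathbf{1}_{S}\in\{-1,0\}^{n}\le \mathbf{0}$, which is by definition polyhedral, convex, and non-increasing in each coordinate. To prove $(i)$, I would pick any $S^{*}$ achieving the maximum at $\mathbf{x}$ and observe that
\begin{equation*}
\pi(\mathbf{z})\ge v(S^{*})-z(S^{*})=\pi(\mathbf{x})+\langle\, -\mathbf{1}_{S^{*}},\,\mathbf{z}-\mathbf{x}\,\rangle \qquad(\forall\mathbf{z}\in\mathbb{R}^{n}),
\end{equation*}
so $-\mathbf{1}_{S^{*}}\in\partial\pi(\mathbf{x})\cap\{-1,0\}^{n}$. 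For $(iii)$, $\pi(\mathbf{x})\ge v(\emptyset)-x(\emptyset)=0$ always, while pushing every coordinate of $\mathbf{x}$ to $+\infty$ drives every term $v(S)-x(S)$ with $S\ne\emptyset$ below zero, forcing $\pi(\mathbf{x})=0$.

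The delicate point in the forward direction is $(ii)$: given an arbitrary $S\subseteq N$, I must exhibit some $\mathbf{x}^{\,0}$ at which $-\mathbf{1}_{S}$ is a subgradient, equivalently at which $S$ is a maximizer of the excess. I would take $x^{\,0}_{k}=-M$ for $k\in S$ and $x^{\,0}_{k}=+M$ for $k\notin S$ with $M$ large, and use the elementary inequality $2|T\cap S|-|T|<|S|$ for every $T\ne S$ to conclude that $S$ is the unique maximizer once $M$ dominates $\max_{T}|v(T)|$. The boundary case $S=\emptyset$ reduces to the argument already used for $(iii)$, and $S=N$ to its dual (all coordinates sent to $-\infty$).

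For the converse, given $\pi$ satisfying $(i)$--$(iii)$, define $v$ through~\eqref{eq:mart15}. Finiteness of the infimum follows from $(ii)$: since $-\mathbf{1}_{S}\in\partial\pi(\mathbf{x}^{\,0})$ for some $\mathbf{x}^{\,0}$, the zero vector lies in $\partial\bigl(\pi(\cdot)+(\cdot)(S)\bigr)(\mathbf{x}^{\,0})$, so $\mathbf{x}^{\,0}$ attains the minimum. The normalization $v(\emptyset)=0$ is precisely $(iii)$. To show that the indirect function of this $v$ is the given $\pi$, the inequality $\max_{S}\{v(S)-y(S)\}\le\pi(\mathbf{y})$ is built into the definition, while the reverse inequality uses $(i)$: choosing $-\mathbf{1}_{S^{*}}\in\partial\pi(\mathbf{y})$ forces $\mathbf{y}$ to minimize $\pi(\cdot)+(\cdot)(S^{*})$, so $v(S^{*})-y(S^{*})=\pi(\mathbf{y})$. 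Uniqueness is automatic, since any TU game with indirect function $\pi$ is forced by~\eqref{eq:mart15} to coincide with $v$ pointwise.

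The hard part is the combinatorial verification of $(ii)$, which requires producing a concrete $\mathbf{x}^{\,0}$ realizing each of the $2^{n}$ sign patterns $-\mathbf{1}_{S}$ as a subgradient; this same property is what rescues the converse from having $v(S)=-\infty$, so $(ii)$ cannot be dropped even though it might look redundant given $(i)$. Everything else reduces to routine Fenchel--Moreau bookkeeping combined with the standard fact that the subdifferential of a polyhedral convex function at a point is the convex hull of the gradients of the affine pieces active there.
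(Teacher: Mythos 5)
Your proof is correct. The paper itself offers no proof of this theorem---it is quoted verbatim from \citet{mart:96} with only a citation---and your argument is the standard conjugate-duality route one finds there: the key steps (realizing each $-\mathbf{1}_{S}$ as an active gradient via the $\pm M$ construction for $(ii)$, and using $(i)$/$(ii)$ to turn subgradient membership into attainment of the minimum in \eqref{eq:mart15}) are all sound, so there is nothing to flag.
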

\noindent Hence, the indirect function provides the same information as the characteristic function. 

According to the above result, the arising {\bfseries indirect function} $\pi: \mathbb{R}^{n}\to \mathbb{R}_{+}$ is given by
\begin{equation}
  \label{eq:indf}
 \pi(\mathbf{x}) = \max_{S \subseteq N}\, \bigg\{v(S) - \sum_{k \in S}\;x_{k} \bigg\} \qquad\forall\mathbf{x}\in\mathbb{R}^{n},
\end{equation}

This approach allows us to propose a more efficient and simple method to determine a pre-kernel element (cf.~\citet{mei:13}). 

The pre-imputation that comprises the possibility of compensation between a pair of players $i, j \in N, i \neq j$, is denoted as $\mathbf{x}^{\;i,j,\delta} = (x^{\;i,j,\delta}_{k})_{k \in N}\in \mathcal{I}^{*}(N,v)$, with $\delta \ge 0$, which is given by
\begin{equation*}
  \mathbf{x}^{\;i,j,\delta}_{N\backslash\{i,j\}} = \mathbf{x}_{N\backslash\{\;i,j\}},\; x^{i,j,\delta}_{i} = x_{i} - \delta\quad\text{and}\quad x^{\;i,j,\delta}_{j} = x_{j} + \delta.
\end{equation*}

By the next Lemma we shall establish that the indirect function $\pi$ of game $v$ can be related to the maximum surpluses.

\begin{lemma}[Lower Bound on Transfers (\citet{mes:97})]
 \label{lem:mes1}
Let $\langle N,v \rangle$ be an $n$-person cooperative game with side-payments. Let $\pi$ and $s_{ij}$ be the associated indirect function and the maximum surplus of player $i$ against player $j$, respectively. If $\epsilon \in \mathbb{R}$ and $\mathbf{x} \in \mathcal{I}^{\epsilon}(v)$, then the equality:
  \begin{equation*}
   s_{ij}(\mathbf{x},v) = \pi(\mathbf{x}^{\;i,j,\delta}) - \delta
  \end{equation*}
holds for every $i,j \in N,\,i\neq j$, and for every $\delta \ge \delta_{1}(\mathbf{x},v)$, where:
\begin{equation*}
  \delta_{1}(\mathbf{x},v) := \max_{k \in N, S \subset N\backslash\{k\}}\; |v(S \cup \{k\}) - v(S) - x_{k}|. 
\end{equation*}
\end{lemma}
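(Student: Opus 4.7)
The plan is to expand $\pi(\mathbf{x}^{\,i,j,\delta})$ directly from~\eqref{eq:indf} and to split the maximum over coalitions according to how $S$ meets the pair $\{i,j\}$. Because the transfer~\eqref{eq:sidepyA} alters only the coordinates $i$ and $j$ of $\mathbf{x}$, the perturbed excess satisfies
\[
  e^{v}(S,\mathbf{x}^{\,i,j,\delta}) = e^{v}(S,\mathbf{x}) + \delta\,\mathbf{1}_{\{i\in S,\; j \notin S\}} - \delta\,\mathbf{1}_{\{j\in S,\; i \notin S\}}.
\]
Taking the maximum separately on the three induced classes of coalitions -- namely $\mathcal{G}_{ij}$, $\mathcal{G}_{ji}$, and those containing both or neither of $i,j$ -- one obtains
\[
  \pi(\mathbf{x}^{\,i,j,\delta}) = \max\bigl\{\, s_{ij}(\mathbf{x},v) + \delta,\; s_{ji}(\mathbf{x},v) - \delta,\; M_{ij}(\mathbf{x})\,\bigr\},
\]
where $M_{ij}(\mathbf{x})$ abbreviates the largest excess at $\mathbf{x}$ over the third class. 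It then remains to show that for $\delta \ge \delta_{1}(\mathbf{x},v)$ the first candidate dominates the other two; the identity $\pi(\mathbf{x}^{\,i,j,\delta}) = s_{ij}(\mathbf{x},v) + \delta$ will then follow because the coalitions in $\mathcal{G}_{ij}$ attain exactly that value by construction.

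The first reduction exploits the marginal-contribution bound built into $\delta_{1}(\mathbf{x},v)$. If $\{i,j\} \subseteq S$, I would pair $S$ with $S \setminus \{j\} \in \mathcal{G}_{ij}$ through the identity $e^{v}(S,\mathbf{x}) - e^{v}(S\setminus\{j\},\mathbf{x}) = v(S) - v(S\setminus\{j\}) - x_{j}$, whose absolute value is, by definition, at most $\delta_{1}(\mathbf{x},v)$. Symmetrically, for $\{i,j\} \cap S = \emptyset$ I would pair $S$ with $S \cup \{i\} \in \mathcal{G}_{ij}$. Either way $M_{ij}(\mathbf{x}) \le s_{ij}(\mathbf{x},v) + \delta_{1}(\mathbf{x},v) \le s_{ij}(\mathbf{x},v) + \delta$, so this third candidate is never binding.

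The more delicate candidate is $s_{ji}(\mathbf{x},v) - \delta$, which I would control by iterating the preceding marginal step twice. For a maximizer $T^{*} \in \argmax_{T \in \mathcal{G}_{ji}}\, e^{v}(T,\mathbf{x})$, the coalition $T^{*} \cup \{i\}$ contains both $i$ and $j$; two successive applications of the $\delta_{1}$ bound (first moving from $T^{*}$ to $T^{*}\cup\{i\}$, then from $T^{*}\cup\{i\}$ to $(T^{*}\cup\{i\})\setminus\{j\} \in \mathcal{G}_{ij}$) yield $s_{ji}(\mathbf{x},v) \le s_{ij}(\mathbf{x},v) + 2\,\delta_{1}(\mathbf{x},v)$, and the asymmetric shift -- $-\delta$ on the left-hand surplus and $+\delta$ on the right -- absorbs the factor~$2$ to give $s_{ji}(\mathbf{x},v) - \delta \le s_{ij}(\mathbf{x},v) + \delta$ whenever $\delta \ge \delta_{1}(\mathbf{x},v)$. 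The main obstacle I anticipate is precisely this last step: one must verify that the sharper threshold $\delta \ge \delta_{1}(\mathbf{x},v)$ rather than $\delta \ge 2\,\delta_{1}(\mathbf{x},v)$ really suffices, and this hinges on the cancellation between the $\pm\delta$ sides, not on any refinement of the $\delta_{1}$ estimate itself.
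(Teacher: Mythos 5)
Your proof is correct and is essentially the standard argument behind this lemma (the one given in the cited sources): partition the coalitions by their intersection with $\{i,j\}$, observe that the transfer shifts the excesses on $\mathcal{G}_{ij}$ and $\mathcal{G}_{ji}$ by $+\delta$ and $-\delta$ respectively and leaves the rest unchanged, and then use the marginal-contribution bound $\delta_{1}$ once (for coalitions containing both or neither of $i,j$) and twice (to compare $s_{ji}$ with $s_{ij}$, where the $2\delta_{1}$ is absorbed by the $2\delta$ gap) to show the candidate $s_{ij}(\mathbf{x},v)+\delta$ dominates. All three estimates check out, so the identity $\pi(\mathbf{x}^{\;i,j,\delta})=s_{ij}(\mathbf{x},v)+\delta$ follows exactly as you argue.
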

\begin{proof}
  For a proof, consult~\citet{mes:97} or~\citet[Lemma~5.3.1]{mei:13}.
\end{proof}

A characterization of the pre-kernel in terms of the indirect function is due to~\citet{mes:97}. Here, we present this representation for the trivial coalition structure $\mathcal{B}=\{N\}$ only since our algorithm evaluates a pre-kernel element for the grand coalition. Thus, for our purpose, it is enough to concentrate on the trivial coalition structure while restating the following result: 

\begin{proposition}[The Indirect Function Characterizes the Pre-Kernel (\citet{mes:97})]
\label{prop:mese1}
For a TU game with indirect function $\pi$, a pre-imputation $\mathbf{x} \in \mathcal{I}^{*}(N,v)$ is in the pre-kernel of $\langle N,v \rangle$ if, and only if, for every $i,j \in N \; i < j$, and some $\delta \ge \delta_{1}(v,\mathbf{x})$, one receives
\begin{equation*}
 \pi(\mathbf{x}^{\;i,j,\delta}) = \pi(\mathbf{x}^{\;j,i,\delta}).
\end{equation*}
\end{proposition}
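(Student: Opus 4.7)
The plan is to deduce the statement directly from Lemma~\ref{lem:mes1} together with the definition~\eqref{eq:prekA} of the pre-kernel. The Mesch\'{e}r identity $s_{ij}(\mathbf{x},v) = \pi(\mathbf{x}^{\;i,j,\delta}) - \delta$ converts every maximum-surplus equality into a comparison of two values of the indirect function on transfer-shifted pre-imputations, with the common transfer parameter $\delta$ cancelling in the comparison.

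First, I would observe that $\mathbf{x} \in \mathcal{I}^{*}(N,v)$ coincides with $\mathcal{I}^{\epsilon}(v)$ for $\epsilon = 0$, so Lemma~\ref{lem:mes1} is applicable and yields
\begin{equation*}
s_{ij}(\mathbf{x},v) = \pi(\mathbf{x}^{\;i,j,\delta}) - \delta \quad\text{and}\quad s_{ji}(\mathbf{x},v) = \pi(\mathbf{x}^{\;j,i,\delta}) - \delta
\end{equation*}
for every ordered pair $i \neq j$ and every $\delta \ge \delta_{1}(\mathbf{x},v)$. The bound $\delta_{1}(\mathbf{x},v)$ is defined via a maximum over all singletons $k \in N$ and all $S \subset N\backslash\{k\}$, hence it is a single uniform threshold valid simultaneously for all pairs $(i,j)$; there is no need to pick distinct $\delta$'s for distinct pairs.

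Next, I would subtract the two identities above. Since $\delta$ appears in both, it cancels, and I obtain the equivalence
\begin{equation*}
s_{ij}(\mathbf{x},v) = s_{ji}(\mathbf{x},v) \quad\Longleftrightarrow\quad \pi(\mathbf{x}^{\;i,j,\delta}) = \pi(\mathbf{x}^{\;j,i,\delta}).
\end{equation*}
Because the right-hand equality is symmetric under interchange of $i$ and $j$, quantifying over all unordered pairs $\{i,j\}$ (equivalently, over $i<j$, as stated) carries exactly the same information as quantifying over all ordered pairs $i \neq j$. Invoking the definition~\eqref{eq:prekA} of the pre-kernel, which demands the equality of the maximum surpluses for every distinct pair, gives the biconditional:
\begin{equation*}
\mathbf{x} \in \mathcal{PK}(v) \;\Longleftrightarrow\; \pi(\mathbf{x}^{\;i,j,\delta}) = \pi(\mathbf{x}^{\;j,i,\delta}) \text{ for all } i<j,\; \delta \ge \delta_{1}(\mathbf{x},v).
\end{equation*}

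The only subtlety worth flagging is the role of the threshold $\delta \ge \delta_{1}(\mathbf{x},v)$: the identity of Lemma~\ref{lem:mes1} fails for arbitrary $\delta$, so the characterization is genuinely an ``eventually in $\delta$'' statement rather than a pointwise one, and the statement must be read with this quantifier in mind. Beyond this bookkeeping, the proof is a direct substitution; no deeper argument is required, and I do not anticipate a serious obstacle.
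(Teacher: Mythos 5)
Your derivation is correct and is essentially the argument the paper relies on: the paper gives no proof of its own but defers to \citet{mes:97} and \citet[Proposition~5.3.1]{mei:13}, where the proof is exactly this substitution of the identity $s_{ij}(\mathbf{x},v)=\pi(\mathbf{x}^{\,i,j,\delta})-\delta$ from Lemma~\ref{lem:mes1} into the defining condition~\eqref{eq:prekA} of the pre-kernel, with the uniform threshold $\delta_{1}(\mathbf{x},v)$ making ``some $\delta$'' and ``all $\delta\ge\delta_{1}$'' interchangeable. Your bookkeeping on the quantifiers and the $i\leftrightarrow j$ symmetry is the only subtlety, and you handle it correctly.
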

\begin{proof}
  For a proof, see~\citet{mes:97} or~\citet[Proposition~5.3.1]{mei:13}.
\end{proof}
\noindent For non-trivial coalition structures, we refer the reader to the book of~\citet{mei:18c}.

\citet{mes:97} was the first who recognized that based on the result of Proposition~\ref{prop:mese1} a pre-kernel element can be derived as a solution of an overdetermined system of non-linear equations. Every overdetermined system can be equivalently expressed as a minimization problem. The set of global minima coalesces with the pre-kernel set. For the trivial coalition structure $\mathcal{B} = \{N\}$ the overdetermined system of non-linear equations is given by 
\begin{equation}
  \label{eq:fij} 
  \begin{cases}
    f_{ij}(\mathbf{x}) = 0 & \forall i,j \in N, i < j\\[.5em]
    f_{0}(\mathbf{x}) = 0 
  \end{cases}
\end{equation}
where, for some $\delta \ge \delta_{1}(\mathbf{x},v)$, 
\begin{equation*}
  f_{ij}(\mathbf{x}) := \pi(\mathbf{x}^{\;i,j,\delta}) - \pi(\mathbf{x}^{\;j,i,\delta}) \qquad\forall i,j \in N,i<j,\tag{\ref{eq:fij}-a}
\end{equation*} 
and 
\begin{equation*}
  f_{0}(\mathbf{x}) := \sum_{k \in N}\; x_{k} - v(N).\tag{\ref{eq:fij}-b}
\end{equation*}
For any overdetermined system, an equivalent minimization problem is associated such that the set of global minima coincides with the solution set of the system. The solution set of such a minimization problem is the set of payoff vectors $\mathbf{x}$ which minimizes the function 
 \begin{equation}
    \label{eq:objfh}
  h(\mathbf{x}) := \sum_{\substack{i,j \in N\\ i < j}}\; (f_{ij}(\mathbf{x}))^2 + (f_{0}(\mathbf{x}))^2 \ge 0 \qquad\;\forall\,\mathbf{x} \in \mathbb{R}^{n}.
\end{equation}
For further details, see~\citet[Chap. 5 \& 6]{mei:13}. Then, one can establish the subsequent result:

\begin{corollary}[The Set of Global Minima Characterizes the Pre-Kernel (\citet{mei:13})]
 \label{cor:rep}
  For a TU game $\langle N,v \rangle$ with indirect function $\pi$, it holds that 
  \begin{equation}
    \label{eq:prkbyh}
    h(\mathbf{x}) = \sum_{\substack{i,j \in N \\ i < j}}\; (f_{ij}(\mathbf{x}))^2 + (f_{0}(\mathbf{x}))^2 = \min_{\mathbf{y} \in \mathcal{I}^{*}(N,v)}\; h(\mathbf{y}) = 0,
  \end{equation}
if, and only if, $\mathbf{x} \in \mathcal{PrK}(N,v)$.
\end{corollary}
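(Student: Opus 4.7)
The plan is to exploit the sum-of-squares structure of $h$ together with Proposition~\ref{prop:mese1}, which already encodes the pre-kernel balance condition in terms of the indirect function. Since every summand in (\ref{eq:objfh}) is a square of a real-valued function, we have $h(\mathbf{x}) \ge 0$ on $\mathbb{R}^{n}$, so the infimum of $h$ is bounded below by zero. The task therefore reduces to identifying the vectors $\mathbf{x}$ at which $h$ vanishes and arguing that this vanishing set coincides with $\mathcal{PrK}(N,v)$.

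First I would unwind the condition $h(\mathbf{x})=0$ coordinate-wise. Because a finite sum of non-negative terms is zero exactly when each term is zero, $h(\mathbf{x})=0$ is equivalent to the simultaneous system $f_{0}(\mathbf{x})=0$ and $f_{ij}(\mathbf{x})=0$ for all $i,j \in N$ with $i<j$. Unpacking definition~(\ref{eq:fij}-b), the equation $f_{0}(\mathbf{x})=0$ says exactly that $x(N)=v(N)$, i.e.\ $\mathbf{x}\in\mathcal{I}^{*}(N,v)$. Unpacking (\ref{eq:fij}-a), the remaining equations $f_{ij}(\mathbf{x})=0$ become $\pi(\mathbf{x}^{\;i,j,\delta})=\pi(\mathbf{x}^{\;j,i,\delta})$ for every ordered pair $i<j$ and for some $\delta \ge \delta_{1}(\mathbf{x},v)$.

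Next I would invoke Proposition~\ref{prop:mese1}: for $\mathbf{x}\in\mathcal{I}^{*}(N,v)$, the equalities $\pi(\mathbf{x}^{\;i,j,\delta})=\pi(\mathbf{x}^{\;j,i,\delta})$ for all $i<j$ and sufficiently large $\delta$ characterize exactly the pre-kernel. Combining this with the efficiency condition from the previous step yields the equivalence
\[
h(\mathbf{x}) = 0 \iff \mathbf{x}\in\mathcal{I}^{*}(N,v) \text{ and } \pi(\mathbf{x}^{\;i,j,\delta})=\pi(\mathbf{x}^{\;j,i,\delta})\ \forall\,i<j \iff \mathbf{x}\in\mathcal{PrK}(N,v).
\]
For the converse direction, any $\mathbf{x}\in\mathcal{PrK}(N,v)$ is by definition a pre-imputation, so $f_{0}(\mathbf{x})=0$, and Proposition~\ref{prop:mese1} provides the indirect-function equalities needed to conclude $f_{ij}(\mathbf{x})=0$ for all $i<j$, whence $h(\mathbf{x})=0$. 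Finally, since $h\ge 0$ and $\mathcal{PrK}(N,v)\neq\emptyset$ (a standard fact guaranteed by the non-emptiness of the pre-kernel on $\mathcal{G}^{n}$), the value $0$ is actually attained, so $\min_{\mathbf{y}\in\mathcal{I}^{*}(N,v)}h(\mathbf{y})=0$ and the two chained equalities in~(\ref{eq:prkbyh}) hold together.

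The only delicate point I anticipate is the role of the parameter $\delta$ in (\ref{eq:fij}-a): the functions $f_{ij}$ are defined only for $\delta\ge\delta_{1}(\mathbf{x},v)$, and this threshold itself depends on $\mathbf{x}$. However, Lemma~\ref{lem:mes1} shows that once $\delta$ exceeds $\delta_{1}(\mathbf{x},v)$ the quantity $\pi(\mathbf{x}^{\;i,j,\delta})-\delta$ coincides with the maximum surplus $s_{ij}(\mathbf{x},v)$ and is independent of the particular $\delta$ chosen in that range, so the condition $f_{ij}(\mathbf{x})=0$ is well posed and unambiguous; this is implicit in Proposition~\ref{prop:mese1} and carries over directly. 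No additional obstacle remains.
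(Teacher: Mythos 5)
Your proposal is correct and follows essentially the same route the paper takes: it reduces $h(\mathbf{x})=0$ to the overdetermined system~\eqref{eq:fij} via the sum-of-squares structure, identifies $f_{0}(\mathbf{x})=0$ with efficiency and $f_{ij}(\mathbf{x})=0$ with the indirect-function equalities of Proposition~\ref{prop:mese1}, and uses non-emptiness of the pre-kernel to see that the minimum value $0$ is attained. Your remark that Lemma~\ref{lem:mes1} makes the choice of $\delta\ge\delta_{1}(\mathbf{x},v)$ immaterial is exactly the point the paper relies on implicitly.
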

Therefore, the pre-kernel set can be characterized as the solution set of function $h$. This result will be used to propose a comprehensible and more efficient routine to seek a pre-kernel element. Nevertheless, notice that the objective function $h$ of type~\eqref{eq:objfh} is not suitable to implement an efficient and simple method to single out a pre-kernel element. This is due to that the structural form of objective function $h$ is not well-behaved. The function is neither convex nor quadratic, nor it is differentiable everywhere. Apparently, it is possible to apply on function~\eqref{eq:objfh} a modified {\itshape Steepest Descent Method} to get a pre-kernel point. This was the proposed method by~\citet{mes:97}. In contrast, the proposed methods presented by~\citet[Algorithms 8.1.1-8.3.1]{mei:13} rely on the idea to identify smaller and more tractable optimization problems, from which a pre-kernel element can be determined. This preferred method is applicable by the observation that the objective function $h$ is composed of a finite family of quadratic and convex function (cf.~\citet[Proposition 6.2.2]{mei:13}). The proposed algorithms subsequently solving convex optimization problems which converge by a finite number of iteration steps to a pre-kernel point. We shall reexamine some crucial properties of the pre-kernel to identify a partition on the payoff space, that is, a partition of the domain of function $h$ into payoff equivalence classes, from which a unique convex and quadratic function can be obtained. As we understand these details, we discuss the method that is employed by the Algorithm~\ref{alg:01} to find an element from the pre-kernel set. 

To identify a partition of the domain of function $h$ into payoff equivalence classes, we first define the set of {\bfseries most effective} or {\bfseries significant coalitions} for each pair of players $i,j \in N, i \neq j$ at the payoff vector $\mathbf{x}$ by
\begin{equation}
  \label{eq:bsc_ij}
  \mathcal{C}_{ij}(\mathbf{x}):=\bigg\{S \in \mathcal{G}_{ij}\;\bigg\arrowvert\; s_{ij}(\mathbf{x},v) = e^{v}(S,\mathbf{x}) \bigg\}.
\end{equation}

When we gather for all pairs of player $i,j \in N, i \neq j$ all these coalitions that support the claim of a specific player over some other players, we have to consider the concept of the collection of most effective or significant coalitions w.r.t. $\mathbf{x}$, which we define as in~\citet[p. 315]{MPSh:79} by 
\begin{equation}
  \label{eq:bsc}
  \mathcal{C}(\mathbf{x}) := \bigcup_{\substack{i,j \in N \\ i \neq j} } \; \mathcal{C}_{ij}(\mathbf{x}).
\end{equation}

From the set of most effective or significant coalitions of a pair of players $i,j \in N, i \neq j$ at the payoff vector $\mathbf{x}$ the smallest cardinality over the set of most effective coalitions is defined as
\begin{equation}
  \label{eq:min_idx}
   \Phi_{ij}(\mathbf{x}) := \min \bigg\{\arrowvert S \arrowvert  \;\bigg\arrowvert\;  S \in \mathcal{C}_{ij}(\mathbf{x}) \bigg\}.
\end{equation}
Gathering all these sets having the smallest cardinality for all pairs of players $i,j \in N, i \neq j$, we end up with
\begin{equation}
  \label{eq:mi_coal}
  \Psi_{ij}(\mathbf{x}) := \bigg\{ S \in \mathcal{C}_{ij}(\mathbf{x}) \;\big\arrowvert \Phi_{ij}(\mathbf{x}) = \arrowvert S \arrowvert \bigg\}.
\end{equation}

With respect to an arbitrary payoff vector $\mathbf{x}$, the set of coalitions of smallest cardinality $\Psi_{ij}(\mathbf{x})$ which is minimized w.r.t. the lexicographical order $<_{L}$ is determined by
\begin{equation}
  \label{eq:sm_cidx}
  \mathcal{S}_{ij}(\mathbf{x}) := \bigg\{ S \in  \Psi_{ij}(\mathbf{x}) \;\bigg\arrowvert S <_{L} T\;\,\text{for all}\; S \neq T \in \Psi_{ij}(\mathbf{x}) \bigg\} \quad\forall i,j \in N, i \neq j.
\end{equation}
We denote this set as the {\bfseries lexicographically smallest most effective coalitions} w.r.t. $\mathbf{x}$.

After these preparative definitions, we are in a position to specify the set of lexicographically smallest most effective coalitions w.r.t. $\mathbf{x}$ through
\begin{equation}
  \label{eq:mec}
  \mathcal{S}(\mathbf{x}) := \bigcup_{\substack{i,j \in N \\ i \neq j} } \; \mathcal{S}_{ij}(\mathbf{x}). 
\end{equation}
This set will be denoted more concisely as the set of {\bfseries lexicographically smallest coalitions}. Given the correspondence $\mathcal{S}$ on the payoff space, we say that two payoff vectors $\mathbf{x}$ and $\mathbf{y}$ are equivalent w.r.t. the binary relation $\sim$ iff $\mathcal{S}(\mathbf{x}) = \mathcal{S}(\mathbf{y})$. This binary relation induces a partition on the payoff space. Having identified payoff equivalence classes, we can select an arbitrary payoff vector to get a unique quadratic and convex function. To see this, select payoff vector $\mathbf{x}$ from payoff equivalence class $[\vec{\gamma}]$, then we get the set $\mathcal{S}(\mathbf{x})$, from which a rectangular matrix $\mathbf{E}$ can be constructed through $\mathbf{E}_{ij}:= (\mathbf{1}_{S_{ji}} - \mathbf{1}_{S_{ij}}) \in \mathbb{R}^{n}, \;\forall i,j \in N, i < j$, and $\mathbf{E}_{0} := - \mathbf{1}_{N} \in \mathbb{R}^{n}$. Let $q =\binom{n}{2} + 1$ combining these $q$-column vectors, we can construct matrix $\mathbf{E}$ as a $(n \times q)$-matrix in $\mathbb{R}^{n \times q}$, which is given by
\begin{equation}
\label{eq:matE}
\mathbf{E} := [\mathbf{E}_{1,2}, \ldots,\mathbf{E}_{n-1,n},\mathbf{E}_{0}]  \in \mathbb{R}^{^{n \times q}}.
\end{equation}
A matrix $\mathbf{Q} \in \mathbb{R}^{n^2}$ can now be expressed as $\mathbf{Q} = 2 \cdot \mathbf{E} \; \mathbf{E}^{\top}$, a column vector $\mathbf{a}$ as $2 \cdot \mathbf{E} \; \vec{\alpha} \in \mathbb{R}^{n}$. Whereas, vector $\vec{\alpha} \in \mathbb{R}^{q}$ is specified through $\alpha_{ij} := (v(S_{ij}) - v(S_{ji})) \in \mathbb{R} \;\forall i,j \in N, i < j$ and $\alpha_{0} := v(N)$. Finally, the scalar $\alpha$ is given by $\Arrowvert \vec{\alpha} \Arrowvert^2$, whereas $\mathbf{E} \in \mathbb{R}^{n \times q}, \mathbf{E}^{\top} \in \mathbb{R}^{q \times n}$ and $\vec{\alpha} \in \mathbb{R}^q$.

From vector $\vec{\gamma}$, the set~\eqref{eq:mec} is constructed and then matrix $\mathbf{Q}$, column vector $\mathbf{a}$, and scalar $\alpha$ are induced, from which a quadratic and convex function can be specified through 
\begin{equation}
 \label{eq:objf2}
h_{\gamma}(\mathbf{x}) = (1/2) \cdot \langle\; \mathbf{x},\mathbf{Q} \,\mathbf{x} \;\rangle + \langle\; \mathbf{x}, \mathbf{a} \;\rangle + \mathbf{\alpha} \qquad \mathbf{x} \in \mathbb{R}^{n},
\end{equation}
In view of Proposition 6.2.2~\citet{mei:13}, function $h$, as defined by~\eqref{eq:objfh} is composed of a finite family of quadratic and convex functions of type~\eqref{eq:objf2}. For the details, we refer the interested reader to~\citet[Chap. 5 \& 6]{mei:13}. In accordance with Theorem 7.3.1 by~\citet[p. 137]{mei:13} a dual representation of the pre-kernel is obtained as a finite union of convex and restricted solution sets $M(h_{\gamma_{k}}, \overline{[\vec{\gamma}_{k}]})$ of a quadratic and convex function $h_{\gamma_{k}}$, that is, 

\begin{equation}
    \label{eq:nullsp}
\mathcal{PrK}(N,v) = \bigcup_{k \in \mathcal{J}^{\prime}}\;  M(h_{\gamma_{k}}, \overline{[\vec{\gamma}_{k}]}).
\end{equation}
where $\mathcal{J}^{\prime}$ is a finite index set such that $ \mathcal{J}^{\prime} : = \{k \in \mathcal{J}\, \arrowvert\, g(\vec{\gamma}_{k}) = 0\}$. In addition, $g(\vec{\gamma}_{k}) = 0$ is the minimum value of a minimization problem under constraints of function $h_{\gamma_{k}}$ over the closed payoff set $\overline{[\vec{\gamma}_{k}]}$. The solution sets $M(h_{\gamma_{k}}, \overline{[\vec{\gamma}_{k}]})$ are convex. Taking the finite union of convex sets gives us a non-convex set if $\arrowvert\,\mathcal{J}^{\prime}\,\arrowvert \ge 2$. Hence, the pre-kernel set is generically a non-convex set. For the class of convex games and three-person games we have $\arrowvert\,\mathcal{J}^{\prime}\,\arrowvert =1$, which implies that the per-kernel must be a singleton.

To this end, we consider a mapping that sends a point $\vec{\gamma}$ to a point $\vec{\gamma}_{\circ} \in M(h_{\vec{\gamma}})$ through 
\begin{equation}
 \label{eq:map_alg}
  \Gamma(\vec{\gamma}) := -\bigg(\mathbf{Q}^{\dagger}\; \mathbf{a} \bigg)(\vec{\gamma}) = -\bigg(\mathbf{Q}^{\dagger}_{\vec{\gamma}}\; \mathbf{a}_{\vec{\gamma}} \bigg)  = \vec{\gamma}_{\circ} \in M(h_{\vec{\gamma}}) \qquad\forall \vec{\gamma} \in \mathbb{R}^{n},
\end{equation}
where $\mathbf{Q}_{\vec{\gamma}}$ and $\mathbf{a}_{\vec{\gamma}}$ are the matrix and the column vector induced by vector $\vec{\gamma}$, respectively. Notice that matrix $\mathbf{Q}^{\dagger}_{\vec{\gamma}}$ is the pseudo-inverse of matrix $\mathbf{Q}_{\vec{\gamma}}$. In addition, the set $M(h_{\vec{\gamma}})$ is the solution set of function $h_{\vec{\gamma}}$. Under a regime of orthogonal projection this mapping induces a cycle-free method to evaluate a pre-kernel point for any class of TU games. We restate here Algorithm 8.1.1 of~\citet{mei:13} in a more succinctly written form through Method~\ref{alg:01}.

\footnotesize
\begin{algorithm}[H]
\DontPrintSemicolon
\SetAlgoLined
\KwData{Arbitrary TU Game $\langle\,N, v\,\rangle$, and a payoff vector $\vec{\gamma}_{0} \in \mathbb{R}^{n}$.}
\KwResult{A payoff vector s.t. $\vec{\gamma}_{k+1} \in \mathcal{PrK}(N,v)$.}
\Begin{
  \nlset{0} $k \longleftarrow 0, \quad\mathcal{S}(\vec{\gamma}_{-1}) \longleftarrow \emptyset$\;
  \nl Select an arbitrary starting point $\vec{\gamma}_{0}$\;
      \lIf{$\vec{\gamma}_{0} \notin \mathcal{PrK}(N,v)$}{Continue} \lElse{Stop}
  \nl Determine $\mathcal{S}(\vec{\gamma}_{0})$\;
      \lIf{$\mathcal{S}(\vec{\gamma}_{0}) \neq \mathcal{S}(\vec{\gamma}_{-1})$}{Continue} \lElse{Stop}
 \Repeat{$\mathcal{S}(\vec{\gamma}_{k+1}) = \mathcal{S}(\vec{\gamma}_{k})$}{
  \nl \lIf{$\mathcal{S}(\vec{\gamma}_{k}) \neq \emptyset$}{Continue} \lElse{Stop}
  \nl     Compute $\mathbf{E}_{k}$ and $\vec{\alpha}_{k}$ from $\mathcal{S}(\vec{\gamma}_{k})$ and $v$\;
  \nl  Determine $\mathbf{Q}_{k}$ and $\mathbf{a}_{k}$ from $\mathbf{E}_{k}$ and $\vec{\alpha}_{k}$\;
  \nl Calculate by Formula~\eqref{eq:map_alg} $\mathbf{x}$\;
  \nl $k \longleftarrow k+1$\;
  \nl $\vec{\gamma}_{k+1} \longleftarrow \mathbf{x}$\;
  \nl Determine $\mathcal{S}(\vec{\gamma}_{k+1})$\; 
        }
 }
 \caption{~Procedure to seek a Pre-Kernel Element}
 \label{alg:01}
 \end{algorithm}
\normalsize

\noindent \citet[Theorem 8.1.2]{mei:13} establishes that this iterative procedure converges toward a pre-kernel point. In view of~\citet[Theorem 9.1.2]{mei:13}, we even know that at most $\binom{n}{2}-1$-iteration steps are sufficient to successfully terminate the search process. However, we have some empirical evidence that generically a maximum of $n+1$-iteration steps are needed to determine an element from the pre-kernel set.

\section{Characterization of Efficiently Solvable Problems}
\label{sec:effsolvprob}
Resume that an algorithm is called to be of polynomial-time if its running time is upper bounded by a polynomial expression in the size of the input, as $O(n^{k})$, where $k$ is a constant. This means that the algorithm can be considered efficient or tractable. In contrast, an algorithm is said to be of exponential-time if the running time is upper bounded by $O(2^{poly(n)})$. For instance, if $poly(n)=n$, then we have $O(2^{n})$

To measure the input, at least the data of a cooperative game in the total number of players must be supplied to the model. Depending on the required precision and analysis of the model, more input data must be provided. For instance, arbitrary precision is required for a rational number model to encode the input as a finite string of bits. This implies that the number of elementary bit operations must be counted for a rational number model to grasp the underlying bit complexity. This is not needed for a real Random Access Machine (RAM) model, where it is assumed that the real number can be represented by infinite precision so that each operation on a real number takes constant time, regardless of precision.

For estimating the computational complexity of an optimization problem, we also have to care about strongly polynomial-time algorithms, for which it is stated that
 
\begin{enumerate}
\item it consists of the basic operations, viz., addition, subtraction, multiplication, division, as well as comparison; and
\item these basic operations are carried out on rationals/integers of size polynomially bounded in the dimension of the input; and 
\item the number of these operations is polynomially bounded in the dimension of the input.
\end{enumerate}

Roughly speaking, an algorithm is called strongly polynomial whenever its running time is a polynomial function of the number of $n$ the size of the input and independent of the magnitude of the input values.

Moreover, an algorithm is called {\bfseries NP}-hard if it refers to a problem that is at least as difficult to solve as the hardest problems in the complexity class {\bfseries NP} (class of nondeterministic polynomial-time problems). In particular, computing the (pre-)nucleolus or a (pre-)kernel point is {\bfseries NP}-hard in general. This is due to the fact that the best-known general algorithms for computing the (pre-)nucleolus involve, for instance, solving a sequence of linear programming problems, which have a running time that is exponential in the number of players $n$, since there are $2^{n}$ possible coalitions (exponential complexity).

\section{Ellipsoid Method of Computing the Nucleolus}
\label{sec:methellip}
\citet{faiglekernkuip:01} proposed a first polynomial-time algorithm for computing the nucleolus of a TU game while attacking an allocation in the intersection of the least-core with the pre-kernel. For TU games for which the pre-kernel contains exactly one core element, one can conclude that even the nucleolus was computed due to the following result: 

\begin{theorem}[Single Point Intersection of Core and Per-Kernel]
  \label{thm:intcoreprk}
  Let $\langle\, N, v\,\rangle$ be a TU game. If $\mathcal{C}(N,v) \cap \mathcal{PrK}(N,v) = \{\mathbf{x}^{*}\}$ is satisfied, i.e., the intersection of the core and pre-kernel just consists of a single point, then it holds $\nu(N,v) = \mathbf{x}^{*}$. Hence, the single intersection point is the nucleolus of the game $\langle\, N, v\,\rangle$.
\end{theorem}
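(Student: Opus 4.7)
The plan is to exploit two classical facts about the pre-nucleolus: first, that $\nu(N,v) \in \mathcal{PrK}(N,v)$ for every TU game, and second, that whenever the core is non-empty the pre-nucleolus is contained in it. Given these two inclusions, the argument collapses to pure set-theoretic containment and no substantive optimization is required.

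First I would observe that the hypothesis $\mathcal{C}(N,v) \cap \mathcal{PrK}(N,v) = \{\mathbf{x}^{*}\}$ forces $\mathcal{C}(N,v) \neq \emptyset$, since the intersection is non-empty. Combining the inclusion $\nu(N,v) \in \mathcal{PrK}(N,v)$ with the core-containment of the pre-nucleolus whenever $\mathcal{C}(N,v) \neq \emptyset$ yields
\begin{equation*}
\nu(N,v) \;\in\; \mathcal{C}(N,v) \cap \mathcal{PrK}(N,v) \;=\; \{\mathbf{x}^{*}\},
\end{equation*}
so $\nu(N,v) = \mathbf{x}^{*}$ follows immediately.

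A subtlety worth flagging is the distinction between the nucleolus (defined as the lexicographic minimizer of $\theta(\mathbf{x})$ over the imputation set $\mathcal{I}(N,v)$) and the pre-nucleolus $\nu(N,v)$ (defined over $\mathcal{I}^{*}(N,v)$): in general the two may differ, but as soon as the core is non-empty they coincide, because $\mathcal{C}(N,v) \subseteq \mathcal{I}(N,v)$ forces the lexicographic minimizer over $\mathcal{I}^{*}(N,v)$ to be individually rational. This validates the phrasing "is the nucleolus" in the theorem's conclusion. I do not foresee a genuine obstacle: the two classical inclusions underlying the argument, namely $\nu \in \mathcal{PrK}$ and core-containment of $\nu$ when $\mathcal{C} \neq \emptyset$, are both standard and can be checked by a short lexicographic-improvement argument along the pairwise transfers~\eqref{eq:sidepyA}. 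In particular, were $s_{ij}(\nu,v) \neq s_{ji}(\nu,v)$ for some pair, a sufficiently small transfer of type $\nu^{\,i,j,\delta}$ would strictly reduce the ordered excess vector $\theta$, contradicting the lexicographic minimality of $\nu$; and were $\nu \notin \mathcal{C}(N,v)$ with the core non-empty, then any core point would lexicographically dominate $\nu$, again a contradiction.
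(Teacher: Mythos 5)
Your proof is correct and is precisely the standard argument that the paper leaves implicit --- the theorem is stated there without any proof. The two classical inclusions you invoke ($\nu(N,v) \in \mathcal{PrK}(N,v)$ always, and $\nu(N,v) \in \mathcal{C}(N,v)$ whenever the core is non-empty, the latter following from $\theta_{1}(\mathbf{y}) = 0$ for any core element $\mathbf{y}$ and lexicographic minimality) immediately force $\nu(N,v)$ into the singleton intersection, and your observation that non-emptiness of the core makes the nucleolus and pre-nucleolus coincide correctly justifies the theorem's final sentence.
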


Thus, the method of \citet{faiglekernkuip:01} for computing the nucleolus of a TU game can be applied to convex games, where the pre-kernel is a singleton, or minimum cost spanning tree games, where the core consists of a single point. But not to the class of permutation games, where the pre-kernel is inscribed in the least-core of the game (for more details, we refer the reader to~\citet{sol:14a,mei:14b}). 

In order to determine this exposed point in the intersection of the core with the pre-kernel of a TU game,~\citet{faiglekernkuip:01} imposed a combination of the ellipsoid method for solving sequences of linear programming problems with an approximation scheme for a (pre-)kernel point. This approximation scheme was first postulated by M. Maschler at the 1965 Jerusalem Conference on Game Theory, where the convergence was given as an open problem. In the context of the ellipsoid method, sufficiently many steps of Maschler's approximation scheme must be invoked to find suitable separating hyperplanes to apply the ellipsoid method.~\citet{stea:68} provided for Maschler's approximation scheme of the kernel a convergence proof using topological arguments while imposing the crucial condition that there exists an infinite number of transfers between a pair of players that are maximal. The required boundedness on the transfer to assure convergence was inherited by the compactness of the imputation set and the definition of the kernel. By~\citet[Theorem~4.1]{faiglekernkuip:01}, this result was extended to the pre-kernel without using any topological argument. Here, it was sufficient to establish that the imposed transfer scheme is bounded on the pre-imputation set, which is not a compact set. Nevertheless, their proof is logically flawed since the authors confound a proof by contradiction with a proof by contraposition (for further details, we refer to~\citet{mei:19}). Fortunately for them, their proof works out.

Notice that the main result of~\citeauthor{faiglekernkuip:01} in establishing a polynomial-time computation of the nucleolus relies on the crucial assumption that there exists a polynomial-time algorithm for computing all surpluses $s_{ij}(\mathbf{x},v)$ over all distinct pairs of players for any allocation $\mathbf{x} \in \mathbb{R}^{n}$. This is equivalent to imposing the assumption that the minimum excess $e^{v}_{\text{min}}(\mathbf{x})$ for any allocation $\mathbf{x} \in \mathbb{R}^{n}$ can also be efficiently computed by an algorithm, viz., in polynomial-time. This allows the authors to disregard the problematic issue of {\bfseries NP}-hardness.

Due to the known drawback of numerical instability in a real data model (unboundness of complexity; see below for more details) of the ellipsoid method,~\citeauthor{faiglekernkuip:01} studied the ellipsoid method in the field of rational numbers. This imposes several requirements on the input data that must be considered for a rational number model. Firstly, the imposed arbitrary precision of a rational number model requires knowledge of the encoding length of the largest number in the model that imposes an upper bound on the input size. In addition, it requires that the authors need to estimate the facet complexity of the polyhedron spanned by the system of linear inequalities of the linear programming problem that represents the least-core or core of the TU game. The facet complexity indicates the total number of bits required to encode the system of linear constraints (polyhedron) in the model. Moreover, the computational efficiency of the ellipsoid method stems from its ability to deal implicitly with an exponential number of constraints of a linear programming problem through a compact separation oracle. A separation oracle is a routine that, given a point, either confirms that the point is in the feasible set or provides a separating hyperplane. For real-valued linear programming problems, the separation oracle is just a matter of checking all the constraints, which is done efficiently. The complexity of the separation oracle needs to be accounted for in the overall runtime. In this respect, an oracle must query to compute and output the coalitional value as a rational number.

However, because of numerical instability issues of the ellipsoid method with floating-point arithmetic, the approach of~\citeauthor{faiglekernkuip:01} has practical limitations with floating-point arithmetic (for more details, see below).

\section{Sketch of the Ellipsoid Method}
\label{sec:skellip}

The algorithm for the ellipsoid method, pioneered by~\citet{khachiyan:79} for linear programming problems, works through iteratively creating smaller ellipsoids that contain a convex body like a polyhedron (feasible region). The process is started with an ellipsoid that contains the feasible region. In each subsequent step, the volume of the ellipsoid is reduced by a constant factor. This is done by calculating in a first step for each current ellipsoid its center of gravity. In the second step, all objective values that direct in the direction where all of them are becoming worse off are thrown away since the bad region cannot contain the optimum. By contrast, all of those objective values that are becoming better off are put in a new ellipsoid containing still the polyhedron, allowing the volume to be reduced by a constant factor.

In case the center of gravity of the current ellipsoid is feasible for the original linear programming problem, the bad values are thrown away as before. Otherwise, the center of gravity is not feasible; there is still a region with bad values. The bad region of the current ellipsoid is removed as before. A new ellipsoid is created that is smaller than the current ellipsoid. The number of iterations is governed by how many such reductions are needed to guarantee that the ellipsoid is either empty or contains a point in the feasible region (for more details, see~\citet[Chapter 3]{Groetlov:12}).

By the work of~\citet[Thm. 3.1]{Groetlov:81},~\citet[Chapter 6]{Groetlov:12}, it has been established that the problems of separation and optimization for any given polyhedron $\mathcal{P}$ are computationally equivalent since there is a polynomial-time algorithm to solve the separation problem of a polyhedron $\mathcal{P}$ if, and only if, there is a polynomial-time algorithm to solve the linear optimization problem for the polyhedron $\mathcal{P}$. In fact, they provide a sharpened separation result to even deal with non-full-dimensional polyhedrons. Thus, the problem of strictly separating any given polyhedron $\mathcal{P}$ by a point $\mathbf{x} \not\in \mathcal{P}$ s.t. $\mathbf{b}^{\top} \cdot \mathbf{x} < \mathbf{b}^{\top} \cdot \mathbf{y}$ for all $\mathbf{y} \in \mathcal{P}$ with $\mathbf{b} \in \mathbb{Q}^{n}$ holds true is computationally equivalent to solving the linear optimization problem related to polyhedron $\mathcal{P}$.

\section{Approximation Scheme of the Kernel}
\label{sec:approxkr}

The reader may confront our Approach~\ref{alg:01} based on the Fenchel-Moreau conjugation with the Standard Method~\ref{alg:06} from the literature for computing a kernel element while iteratively carrying out a sequence of side-payments between pairs of players to reduce the differences in the maximum surpluses between them. Thus, the algorithm is based on the idea to iteratively carrying out bilateral maximal transfers between pairs of players to approximate a kernel element, instead of a transfer scheme that simultaneously adjust the payoffs of all players involved in the game.

\tiny
\begin{algorithm}
\DontPrintSemicolon
\SetAlgoLined
\KwData{Arbitrary TU Game $\langle\,N, v\,\rangle$, and a payoff vector $\vec{\gamma}_{0} \in dom\,h$.}
\KwResult{A payoff vector s.t. $\vec{\gamma}_{k+1} \in \mathcal{K}(v)$.}
\Begin{
  \nlset{0} $k \longleftarrow 0$\;
  \nl Select an arbitrary starting point $\vec{\gamma}_{0}$.\;
  \nl Pre-specify a tolerance value $\epsilon$.\;
  \Repeat{$\frac{\delta^{*}}{v(N)} \le \epsilon$}{
      \lIf{$\vec{\gamma}_{k} \in dom\,h$}{Continue} \lElse{Stop}
  \nl    Compute the maximum surplus $s_{ij}(\vec{\gamma}_{k})$ for all $i,j \in N, i \neq j $ by~\eqref{eq:maxexc}.\;
  \nl    $\delta^{*} \longleftarrow \max_{ij \in N,i\neq j}\,\{ s_{ij}(\vec{\gamma}_{k}) - s_{ji}(\vec{\gamma}_{k}) \}$.\;
  \nl    Select a player pair $(i^{*},j^{*})$ s.t. $(i^{*},j^{*}) \in \{i,j \in N\,\arrowvert\; \delta^{*} = s_{ij}(\vec{\gamma}_{k}) - s_{ji}(\vec{\gamma}_{k})\}$.\;  

  \nl \lIf{$(\gamma^{j^{*}}_{k}-v(\{j^{*}\})) < \frac{\delta^{*}}{2}$ }{$\delta \longleftarrow (\gamma^{j^{*}}_{k}-v(\{j^{*}\})) $} \Else{$\delta \longleftarrow \frac{\delta^{*}}{2}$}
  \nl $\gamma^{i^{*}}_{k+1} \longleftarrow \gamma^{i^{*}}_{k} + \delta$.\;
  \nl $\gamma^{j^{*}}_{k+1} \longleftarrow \gamma^{j^{*}}_{k} - \delta$.\;
  \nl $k \longleftarrow k+1$.
   } 

}
 \caption{~\citeauthor{stea:68}'~Transfer Scheme to converge to a Kernel Element}   
\label{alg:06}
\end{algorithm}

\normalsize

We reproduce here~\citeauthor{stea:68}' Theorem in an alternative version, as given by~\citet{faiglekernkuip:01}.
\begin{theorem}[\citet{stea:68}]
  \label{thm:stea}
 Let $\{\vec{\gamma}_{0}, \vec{\gamma}_{1}, \vec{\gamma}_{2}, \ldots \}$ be a sequence of vectors such that $\vec{\gamma_{k}}$ arises from $\vec{\gamma}_{k^{\prime}-1}$ by a transfer~\eqref{eq:sidepyA}, and if an infinite number of these transfers are maximal, then $\vec{\gamma_{k}}$ converges to an element $\mathbf{x} \in \mathcal{K}(v)$.
\end{theorem}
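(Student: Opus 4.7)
The plan is to prove convergence by combining a compactness argument with a Lyapunov potential. First, I would establish that the iterates $\vec{\gamma}_k$ remain inside the imputation set $\mathcal{I}(N,v)$. Efficiency $\gamma_k(N) = v(N)$ is preserved since the transfer rule~\eqref{eq:sidepyA} is a zero-sum side-payment. Individual rationality is preserved by the algorithm's explicit cap $\delta \leq \gamma_k^{j^*} - v(\{j^*\})$, which prevents $\gamma_{k+1}^{j^*}$ from dropping below $v(\{j^*\})$. Because $\mathcal{I}(N,v)$ is a compact simplex in $\mathbb{R}^n$, Bolzano--Weierstrass produces a convergent subsequence $\vec{\gamma}_{k_\ell} \to \mathbf{x}^*$.

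Second, I would show $\mathbf{x}^* \in \mathcal{K}(v)$ by contradiction, exploiting that each $s_{ij}(\cdot)$, being the pointwise maximum of finitely many affine functions, is continuous and piecewise linear. If $\mathbf{x}^* \notin \mathcal{K}(v)$, then by the definition of the kernel there exist players $i,j \in N$, $i \neq j$, with $s_{ij}(\mathbf{x}^*) > s_{ji}(\mathbf{x}^*)$ and $x^*_j > v(\{j\})$ simultaneously. By continuity there is a neighborhood $U$ of $\mathbf{x}^*$ and some $\eta > 0$ on which both the strict imbalance and the IR-slack persist with margin at least $\eta$. Consequently, for $\ell$ large, $\vec{\gamma}_{k_\ell} \in U$ and the global maximum imbalance $\delta^*_{k_\ell}$ is bounded below by $\eta$.

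The main obstacle is converting this non-vanishing imbalance into a contradiction: subsequential convergence does not by itself forbid large jumps in the full sequence, and the pair $(i^*, j^*)$ selected at an iterate in $U$ need not coincide with the witness $(i,j)$ --- a priori the chosen pair's IR-slack might bind and the associated transfer be non-maximal. To close this gap I would construct a Lyapunov function $\Phi : \mathcal{I}(N,v) \to \mathbb{R}_{+}$, a natural candidate being $\Phi(\mathbf{x}) := \sum_{i<j}(s_{ij}(\mathbf{x}) - s_{ji}(\mathbf{x}))^2$, analyzed on the cells of the piecewise-linear partition induced by the coalitions active in the surplus maxima. One verifies that $\Phi$ is non-increasing along every iterate and that at each maximal transfer it strictly decreases by an amount quantitatively bounded below by a function of $\delta^*_k$. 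Given this, the hypothesis yields infinitely many maximal transfers, so one of two scenarios must occur: either their associated imbalances $\delta^*_k$ are bounded below, in which case $\Phi$ descends by a uniform positive amount infinitely often --- contradicting $\Phi \geq 0$ on the compact simplex --- or $\delta^*_k \to 0$ along some subsequence of maximal-transfer indices, which forces every pairwise imbalance to vanish in the limit and deposits the corresponding accumulation point inside the kernel. Finally, monotone convergence of $\Phi(\vec{\gamma}_k)$ (bounded below, non-increasing) forces the stepwise decrements to zero, which, together with the continuity of the piecewise-linear transfer map, upgrades subsequential to full convergence to $\mathbf{x}^*$. The delicate ingredient throughout is precisely the quantitative descent of $\Phi$ at a maximal transfer: one must certify that the single-pair side-payment does not degrade $\Phi$ at other pairs enough to offset the reduction at $(i^*,j^*)$. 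This is where~\citet{stea:68}'s original argument leans on topological compactness of $\mathcal{I}(N,v)$, and where~\citet[Theorem~4.1]{faiglekernkuip:01} had to work substantially harder to extend the statement to the non-compact pre-kernel setting.
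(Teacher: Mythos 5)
The paper itself offers no proof of this theorem: it is stated as a quotation of \citet{stea:68} in the form given by \citet{faiglekernkuip:01}, and the surrounding text merely records that Stearns' argument is topological and leans on the compactness of the imputation set. So your attempt has to be judged as a reconstruction of Stearns' argument, and it breaks at exactly the step you yourself flag as ``the delicate ingredient.'' Your entire descent mechanism rests on the assertion that $\Phi(\mathbf{x}) = \sum_{i<j}(s_{ij}(\mathbf{x})-s_{ji}(\mathbf{x}))^{2}$ is non-increasing along the iterates and drops by a quantified amount at each maximal transfer. That assertion is not verifiable; it is false in general. A transfer of size $\delta$ from $j^{*}$ to $i^{*}$ lowers the excess of every coalition containing $i^{*}$ but not $j^{*}$ by $\delta$ and raises the excess of every coalition containing $j^{*}$ but not $i^{*}$ by $\delta$. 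For a third pair $(k,l)$, the maximum in $s_{kl}$ ranges over coalitions of both kinds, so $s_{kl}$ can rise by $\delta$ while $s_{lk}$ falls by $\delta$, i.e.\ the imbalance $s_{kl}-s_{lk}$ can grow by $2\delta = \delta^{*}$. A pair whose imbalance was close to $\delta^{*}$ before the step can therefore end up near $2\delta^{*}$ after it, and the increase of its squared term swamps the $(\delta^{*})^{2}$ gained by annihilating the term for $(i^{*},j^{*})$. Without monotonicity of $\Phi$ the telescoping/monotone-convergence argument, the ``uniform descent infinitely often'' contradiction, and the upgrade from subsequential to full convergence all collapse.

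The quantity that actually is monotone under such transfers is not the sum of squared imbalances but the maximum excess $\max_{\emptyset \neq S \subsetneq N} e^{v}(S,\mathbf{x})$ (equivalently $\max_{i\neq j}s_{ij}(\mathbf{x},v)$): a transfer of size $\delta \le (s_{i^{*}j^{*}}-s_{j^{*}i^{*}})/2$ raises no excess above $s_{j^{*}i^{*}}+\delta \le s_{i^{*}j^{*}} \le \max_{S}e^{v}(S,\mathbf{x})$. Stearns' proof exploits this weaker, one-sided monotonicity together with compactness of $\mathcal{I}(N,v)$ and an analysis of the accumulation set, peeling off the highest excess level and recursing --- which is precisely why it is a ``topological'' argument rather than a Lyapunov descent, and why \citet[Theorem~4.1]{faiglekernkuip:01} had to supply a separate boundedness argument once compactness is lost on $\mathcal{I}^{*}(N,v)$. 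Two smaller points: the theorem as stated concerns an arbitrary sequence of transfers of which infinitely many are maximal, whereas your argument silently assumes the specific selection rule of Algorithm~\ref{alg:06} (always the maximally imbalanced pair); and your continuity argument in the second paragraph needs the kernel's disjunctive condition ($s_{ij}\le s_{ji}$ \emph{or} $x_{j}=v(\{j\})$) handled for the pair actually selected at step $k$, not only for the witness pair at the limit point. The compactness and Bolzano--Weierstrass part of your first paragraph is fine.
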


The~\citeauthor{stea:68}' algorithm runs as follows: During each iteration step, the pair of players with the largest difference in the maximum surpluses is singled out to give them the opportunity to balance their claims. The player with the smallest maximum surplus makes a side-payment to the opponent who has a larger dissatisfaction measured in terms of the excess at the proposed sharing of the proceeds of mutual cooperation. We realize that at each step only a single pair of players has the opportunity to balance their claims with the implication that after each iteration step the unbalancedness in the bilateral claims will be reduced. In the limit, this approach converges to a kernel point. However, computing a kernel element requires an infinite number of maximal transfers as the side-payments carried out become arbitrarily small. These properties of his method have some negative consequences on the convergence speed for floating-point arithmetic.

\section{Ellipsoid Method: Runtime Complexity for Rational Data}
\label{sec:elliprat}
It is well-known that the runtime complexity for the ellipsoid method has a polynomial-time worst-case runtime complexity of $O(n^{6}\cdot L)$. For a linear programming problem in $n$ variables with rational coefficients, the runtime of the ellipsoid method is polynomial in the following factors: 

\begin{enumerate}
\item $n$: Representing the number of variables (dimension of the space).
\item $m$: Representing the number of constraints.
\item $L$: Representing the size of the input data, often measured as the number of bits required to encode the coefficients of the matrix $\mathbf{A}$ and vectors $\mathbf{b}$ and $\mathbf{c}$.  
\item Operations per iteration: Each iteration involves matrix and vector arithmetic. When the input is rational, these operations must be performed using multi-precision arithmetic, which is slower than fixed-precision arithmetic. The time for these operations depends polynomially on the bit size of the numbers involved.
\item Growth of bit length: The numbers representing the centers of gravity and matrices of the ellipsoids grow in size during the runtime of the algorithm. It can be established that the input bit length $L$ bounds this growth polynomially. 
\end{enumerate}

The combination of a polynomial number of iterations and polynomial-time operations per iteration (when accounting for multi-precision arithmetic) leads to an overall polynomial-time algorithm.

Note that the above analysis of algorithms dealing with rational numbers requires a computational model that can handle these factors with arbitrary precision. The standard for this is the Turing machine model, where the input is encoded as a finite string of bits. A rational number, $p/q$, is encoded by storing the binary representation of its integer numerator $p$ and denominator $q$. The size of a rational number is the total number of bits required for this representation. In terms of a rational number model, this means that every operation with a rational number depends on the precision and therefore cannot require constant time. 

For a feasibility problem where a convex set $K$ is either empty or contains a ball of radius $r$ and is contained in a ball of radius $R$, the number of iterations required is $O(n^{2}\log (R/r))$. In addition, for linear programming problems with integer coefficients, this translates to a number of iterations that is polynomial in $n$ and the size of the input $L$.

\section{Ellipsoid Method: Runtime Complexity for Real Data}
\label{sec:ellipreal}
When considering linear programming problems with real-valued data, the runtime complexity of the ellipsoid method is typically analyzed within a specific computational model, such as the real RAM (Random Access Machine) model. Unlike the Turing machine model, which deals with finite bit lengths, the real RAM model assumes that real numbers can be stored and manipulated with infinite precision in constant time. Though a real RAM model is Turing complete, which means that a real RAM model is in principle capable of performing all calculations that are feasible within a Turing machine model. In such a model, the runtime is not expressed in terms of the input's bit length $L$ but instead depends on the number of variables $n$, the number of constraints $m$, and the required precision, $\epsilon$. To summarize, the runtime of the ellipsoid method for the real RAM model is expressed in the following factors: 

\begin{enumerate}
\item $n$: Representing the number of variables.
\item $m$: Representing the number of constraints.  
\item $R$: Representing the radius of an initial ball known to contain the solution.
\item $r$: Representing the radius of a ball that is either contained in the feasible region or demonstrates that the feasible region is empty.
\item $\epsilon$: Representing the desired precision.  
\end{enumerate}

Then, the number of iterations required for the ellipsoid method to find an $\epsilon$-approximate solution for a convex optimization problem is proportional to $O(n^{2}\log (R/r\epsilon ))$. 

In each iteration, the algorithm performs a fixed number of operations, primarily matrix-vector and matrix-matrix multiplications, which can be done in polynomial-time in $n$. While the method's polynomial worst-case complexity holds in theory, this is generally not a good indicator of its practical performance. In practice, it is often outperformed by other methods, such as interior-point methods with a runtime complexity of $O(n^{3}\cdot L)$, which can be more robust to numerical issues. Even the simplex algorithm, though theoretically exponential, is regularly much faster in practice.

\section{Unbounded Complexity in the Real RAM Model}
\label{sec:unbcomp}
The theoretical analysis of the ellipsoid method in the real RAM model hides the practical challenges of using real-valued data with computers, which use finite-precision floating-point arithmetic. 

The ellipsoid method's runtime depends on the required precision, $\epsilon $. The number of iterations is proportional to $O(n^{2}\log (R/r\epsilon))$, where $R$ is the radius of the initial ellipsoid and $r$ is the radius of the largest ball contained in the feasible region. In the real RAM model, the concept of input size is based on the number of variables and constraints, but not on the precision of the numbers, because infinite precision is assumed, so that every operation with a real number takes constant time. As $\epsilon$ can be arbitrarily small, the logarithm term $\log (1/\epsilon)$ can be arbitrarily large, making the number of iterations and thus the runtime unbounded. For example, if one requires a solution with twice as many decimal places, the number of iterations increases. With real numbers, there is no inherent limit to how much precision might be required. This implies numerical instability issues with floating-point arithmetic.

When focusing on the theoretical analysis of the real RAM model, it assumes that real numbers can be represented and manipulated with perfect, infinite precision. As a consequence, the following problems arise when implementing this on a physical computer with finite-precision floating-point arithmetic:

\begin{enumerate}
\item Accumulation of error: It is causative that the ellipsoid method is sensitive to numerical precision issues. Its immanent iterative nature results in the fact that small errors from floating-point arithmetic can accumulate over many steps, potentially causing the algorithm to fail or produce an inaccurate result. This pathological behavior is attributable to the iterative process of repeatedly updating a matrix that describes the ellipsoid's shape and orientation. During each matrix updating, small rounding errors are introduced due to floating-point arithmetic. Over many iterations, these errors accumulate and can corrupt the matrix, causing the algorithm to fail or converge to an incorrect solution.
\item Loss of properties: This is causal because the core function of the algorithm relies on the matrix defining the ellipsoid being positive definite. Floating-point errors can cause this property to be lost during matrix updates, rendering the algorithm's mathematical foundations unsound during practical computations.
\end{enumerate}

It is essentially thanks to the research of~\citet{Traub:1982} that this connection was elucidated by demonstrating the theoretical unbounded complexity of the ellipsoid method in the real RAM model. They provided a $2\times 2$ linear programming example where the number of steps grows with the size of the input coefficients. In the integer model, the coefficients' bit length, $L$, is finite, so the number of steps is polynomial in $L$ and thus bounded. Moreover, in the real number model, the bit length of coefficients is not part of the input measure. As the number of bits required to represent the coefficients grows, the number of steps increases, demonstrating unbounded complexity. This counterexample proved that the ellipsoid algorithm is not polynomial in the real RAM model.

As a consequence, the ellipsoid method is in practice very slow and numerically unstable, especially for larger problems.

\section{Reduced Game and Submodular Function Minimization Method}
\label{sec:rgpsfm}
\citet{maggiorano:2025} conceived a different method for trying to compute the nucleolus for the class of convex games. However, their strategy to determine the nucleolus for this game class is, in the first stage, similar to the one invented by~\citet{faiglekernkuip:01}; they used the least-core as a catcher of the nucleolus. Yet, instead of using the ellipsoid method for computing the least-core value, they applied a submodular function minimization approach in testing the feasibility of the least-core constraints in polynomial-time. From the feasible region of the least-core, the search space is narrowed and bounded to approach by an alternative method the computation of the nucleolus for a convex game. This is referring to a simplified version of the Davis/Maschler reduced game that is induced by the convexity property of the underlying game. Hence, the authors decided against Maschler's approximation method for the (pre-)kernel, making a bet to disregard the computation of all surpluses $s_{ij}(\mathbf{x},v)$ over all distinct pairs of players for any allocation $\mathbf{x} \in \mathbb{R}^{n}$ at the cost of being forced to restrict the computation of the nucleolus on the domain of convex games and the risk of being forced to close a gap in their proposed method. By their suggestion for computing the nucleolus, the structural properties of convex games are exploited. This allows them to refer to the Davis/Maschler reduced game approach to single out the presumptive nucleolus of a convex game. In particular, the process iteratively solves a sequence of Davis/Maschler reduced games by leveraging submodular function minimization for solving the arising linear programming problems associated with the least-cores from the emerging Davis/Maschler reduced games.

So far so good, their intended strategy in computing the nucleolus of a convex game, if it would work; however, we shall observe in a while that this strategy fails. But first, let us examine some other interesting ingredients of their algorithm. 

In particular, although it is not necessary to query a separation oracle for checking whether a point is in the feasible set of the arising least-cores or to provide separating hyperplanes as with the ellipsoid method. The procedure of submodular function minimization requires, nevertheless, querying an oracle for computing and outputting coalitional values. On the positive side, the proposed submodular function minimization approach does not need any assumption made on the existence of any polynomial-time algorithm to assure a polynomial runtime in the nucleolus computation. This puts this model on a more sound and solid logical foundation in contrast to the method suggested by~\citet{faiglekernkuip:01}, thereby avoiding the {\bfseries Principle of Explosion}\footnote{This principle is also subsumed under ex falso sequitur quodlibet (e.f.s.q),~i.e., from falsehood, anything follows. If the antecedence (premise) is inconsistent, it cannot be a true statement. Then, drawing a conclusion is meaningless because anything is inferable. Hence, this says that an inconsistent model is useless.} if the assumption is void (for further details, we refer to~\citet{mei:19}). In addition, the model can be applied for infinite precision models, implying that neither knowledge of the encoding length of the largest number in the model nor knowledge about the total number of bits to encode the problem is necessary.

Next, we restate for the reader's convenience Algorithm 1 from the work of \citet{maggiorano:2025} that should, according to the claim of the authors, compute the nucleolus of a convex game. 

\tiny
\begin{algorithm}
\DontPrintSemicolon
\SetAlgoLined
\KwData{A convex TU Game $\langle\,N, v\,\rangle$, given through an evaluation oracle.}
\KwResult{Presumed nucleolus $\nu(N,v)$ of the game $\langle\,N, v\,\rangle$.}
\Begin{
  \nlset{0} Initialization: Set $k \longleftarrow 1$\;
  \nl Set $v^{k} \longleftarrow v$\;
  \nl Set $N^{k} \longleftarrow N$\;
  \nl Select $i \in N$\;
  \Repeat{i = n}{
      \While{$\arrowvert N^{k} \arrowvert > 1$}{
  \nl    Compute an essential coalition $S^{k}$ and the least-core value $\epsilon^{k}$ of game $\langle\,N^{k}, v^{k}\,\rangle$.\;
  \nl    Define $N^{k+1} \longleftarrow  S^{k}$ if $i \in  S^{k}$; otherwise, $N^{k+1} \longleftarrow N^{k} \backslash S^{k}$.\;
  \nl    Define $T^{k} := N^{k} \backslash N^{k+1}$, and $\vec{x}^{k} := \nu(N^{k},v^{k})$ (resume that $v^{k}(T^{k})$ is known).\;
  \nl    Define $v^{k+1} \longleftarrow v^{k}_{N^{k+1},\,\vec{x}^{k}}$ as $v^{k+1}_{N^{k+1},\,\vec{x}^{k}}(T) := \max\Big\{v^{k}(T),v^{k}(T \cup T^{k})-x^{k}(T^{k})\Big\}, \;\text{if}\; T \subset N^{k+1}$.\;  
  \nl    Set $k \longleftarrow k + 1$\;  
      }
  \nl $\nu_{i}(N,v) \longleftarrow v^{k}(N^{k})$
   } 
 }
 \caption{Presumed Algorithm for Calculating the Nucleolus based on RGP}
\label{alg:07}
\end{algorithm}

\normalsize

However, if we dive more deeply into the details of their purported method to compute the nucleolus and their associated proof, we have to recognize that their Method~\ref{alg:07} is not about the computation of the nucleolus for convex games. It is, in the best case only about its verification. In fact, it is a partial application of the Davis/Maschler reduced game property (RGP) to least-core elements of the convex game $\langle\,N, v\,\rangle$. This can be easily grasped by focusing our attention at level $k=1$. Then, the least-core value of the original convex game is specified in association with an arbitrary optimal allocation $\vec{x}$ that satisfies these constraints. Nevertheless, the authors claim that they have found the nucleolus of the game $\nu(N^{1},v^{1}) = \nu(N,v)$ in line 6 of Algorithm~\ref{alg:07}. Even by inspection of their purported proof, this claim is argumentatively fortified. Unfortunately, this argument used by the authors is obviously wrong. In this respect, we need to ask why one should proceed with their method to compute the nucleolus of a convex game when it was already found at the first stage, viz., $k=1$; this makes no sense. 

Firstly, to decipher the error of the procedure, observe that their proof to assess the correctness of Algorithm~\ref{alg:07} is based on an incorrect application of RGP, which does not lead to a correct computation of the nucleolus. For although the Davis/Maschler reduced game property (RGP) is attributed, for instance, to several solution concepts like the core and the pre-kernel (cf.~\citet{Pel:86b}), and by the work of~\citet{Sob:75} also to the pre-nucleolus, the authors solely attribute, through their Proposition 2.2, this property to the pre-nucleolus. 

In this context, it is well-known through~\citet{MPSh:72} that a Davis/Maschler reduced game $\langle\, S, v_{S,\,\vec{x}}\,\rangle$ is convex for a core allocation $\vec{x}$ whenever the default game $\langle\,N, v\,\rangle$ is convex. Thus, the convexity property of the Davis/Maschler reduced game is not only restricted to the nucleolus $\vec{x}:=\nu(N,v)$ of the original game $\langle\,N, v\,\rangle$ as the authors claim by their Proposition 2.3. It also holds for all reduced games that can be derived from a core allocation of a convex game, i.e., the statement holds for all $\vec{x} \in \mathcal{C}(N,v)$ -- including the nucleolus -- whenever $\langle\,N, v\,\rangle$ is convex. The same argument also applies to the formula applied by the authors in line 7 of Algorithm~\ref{alg:07}. Interestingly, although the author recognized these facts in the appendix of their work, they completely ignored the implications in their proofs.

Contrary to all facts from the literature, they apply, through Proposition 2.2, their own version of RGP. This states that the restriction of the computed nucleolus to the components of coalition $N^{k}$ is also the nucleolus of the reduced game $\langle\, N^{k}, v_{N^{k},\,\vec{x}}\,\rangle$. This is, of course, a true statement. However, this version does not contain the essentials for this game property, so the crucial implications of it are overlooked by~\citeauthor{maggiorano:2025} In particular, that the reduced game property is a consistency property (see Definition~\ref{def:hm_rgp0}).

Consistency is widely regarded as a desirable property of a solution concept in cooperative game theory, as it prevents any possible subsets of players from redistributing the payoffs of mutual cooperation while adhering to the underlying rules of distributive arbitration specified by the solution concept. Roughly speaking, a solution is considered consistent whenever it distributes the same payoff to the players of any appropriately defined reduced game as in the original game; that is to say, a solution is viewed as consistent under this specific rule. No subgroup of agents has an incentive to deviate from the original proposal and to play their own game in order to improve their situation. Consistency can be considered as the subgame perfection of a cooperative solution concept. In the sequel, we show that consistency cannot be used to deduce the original solution, at least not in the manner used by~\citeauthor{maggiorano:2025}, since they put aside that RGP holds even for all core allocations and not only for the nucleolus, culminating in the following statement by~\citeauthor{maggiorano:2025}:  

\begin{quote}
 Furthermore, by Proposition 2.2, the components of the nucleolus computed in the reduced games are equal to the corresponding components of the original nucleolus $\nu$, that is $\nu^{\,k}_{j} = \nu_{j}$ for all $j \in N^{\,k}$ (cited from~\citet[p.~7, first paragraph]{maggiorano:2025}).  
\end{quote}

Next, we will demonstrate that this view is incorrect by using the game example presented in the article by~\citet{maggiorano:2025}, which we will summarize in more detail in Example~\ref{exp:repnuc_cv}. In Example~\ref{exp:count_exp} below, however, we focus only on the essentials of the game example in order to refute the authors' view.

\begin{example}[Counter-Example to the above Statement]
  \label{exp:count_exp}
Be reminded that the game used in this article is convex with a least-core that is constituted by the line segment $$conv\{(3,3,2,2),(2,4,2,2)\},$$ and the nucleolus is given by $\nu(N,v) = (5/2,7/2,2,2)$, which is part of the least-core. 

Note that the algorithm proposed by~~\citeauthor{maggiorano:2025} does not exclude the selection of the vertex $(2,4,2,2)$, for instance, as it does not exclude any other point from the least-core. All allocations from the least-core are admissible selection points, which constitute the set of the nucleolus catcher.    

To refute the above statements, we shall use the second vertex of the least core of the original game to construct an associated reduced game for the coalition $S=\{2,3,4\}$; then we will compute the nucleolus of this reduced game and compare it with the restriction of the nucleolus from the original game to these components. Finally, we will show that the two nucleoli will not coincide, which is the expected result.

For this purpose, let us consider the three-person reduced game $v_{S,\mathbf{x}}$ with $S=\{2,3,4\}$ and $\mathbf{x} = (2,4,2,2)$ that is given by Table~\ref{tab:pkRGP1}:

\begin{center}
\begin{threeparttable}
{\footnotesize
\setlength{\tabcolsep}{.3cm}
\caption{Three-Person Reduced Game $v_{S,\mathbf{x}}$}
\begin{tabular}[c]{c c c c c c c c}
 \hline
Game & $\{2\}$ & $\{3\}$ & $\{4\}$ & $\{2,3\}$ & $\{2,4\}$ & $\{3,4\}$ &  $\{2,3,4\}$ \\[.1em] \hline
$v_{S,\mathbf{x}}$\tnote{a,b,c} &   $1$     &   $0$    &    $0$   &  $4$    &    $4$   &    $0$    &  $8$ \\[.1em] \hline\hline\
\end{tabular}
\vspace{-.5em}
\label{tab:pkRGP1}
\begin{tablenotes}
\item[a] Nucleolus: $(4,2,2)$
\item[b] Shapley Value: $(26,11,11)/6$
\item[c] $v_{S,\mathbf{x}}$ is convex.   
\end{tablenotes}
}
\end{threeparttable}
\end{center}

The components of the original nucleolus to $(2,3,4)$ are given by $(7/2,2,2)$, implying $ (4,2,2) \neq (7/2,2,2)$, which violates the quoted statement from the work of~\citet{maggiorano:2025}. Hence, this violates the reduced game property with respect to the nucleolus solution of the original game for $S=\{2,3,4\}$ at $\mathbf{x} = (2,4,2,2)$. This counter-example establishes the selection issue that arises in the algorithm proposed by these authors when making an inappropriate choice from the least-core. However, notice that the restriction $\mathbf{x}_{S} = (4,2,2)$ belongs to the core of the reduced game $v_{S,\mathbf{x}}$, and satisfies therefore RGP with respect to the core for $S=\{2,3,4\}$ at $\mathbf{x} = (2,4,2,2)$.\hfill$\Diamond$
\end{example}

This implies that the authors are faced with a serious selection issue, which would only be unproblematic when the least-core of a convex game would be a singleton, which cannot be assured in general, as we recognized by the above example (for more details, see Example~\ref{exp:repnuc_cv} in connection with Figure~\ref{fig:repnuc01}, for instance). That is to say, their argument that they have imposed on the nucleolus applies to all allocations within the core and, therefore, even to all allocations in the least-core. As we can see from Example~\ref{exp:count_exp}, the authors cannot guarantee that their algorithm correctly singles out the nucleolus of a convex game rather than a profane member of the core that is even an allocation of the least-core. At best, this can only be ensured while supplying the algorithm with the known nucleolus of the convex game or that the least-core of the game consists of a single point. By these arguments, we conclude that their method might be useful in the verification of a computed candidate of the nucleolus for a convex game, but not more. 

In this regard, the authors claimed to have provided a first strongly polynomial-time algorithm for computing the nucleolus of convex games. If anything, these authors can only be credited with having sketched an algorithm suitable for checking the nucleolus of a convex game in strongly polynomial time, but not for its computation. Since, according to the arguments above, this cannot be guaranteed with their flawed method. Even the suitability to verify the nucleolus needs further adjustment to the algorithm, as it requires knowledge about the maximum surpluses $s_{ij}(\mathbf{x},v)$ over all pairs of players. To simplify matters, we assume the suitability in the sequel to finalize our analysis.

Nevertheless, let us take a closer look at the runtime complexity of verifying the nucleolus of a convex game by their method. The estimated total runtime complexity of their approach is about $O(\text{Total\ Runtime})\\=O(n^{4}\cdot \text{SFM})$, where $\text{SFM}$ denotes a placeholder for a subroutine for submodular function minimization. That means that the overall algorithm for verifying the nucleolus for convex games makes $O(n^{4})$ calls to the SFM subroutine. One of the best-known subroutines for submodular function minimization at our time can be attributed to the work of~\citet{lee:2015}\footnote{Other polynomial-time methods from this line of research are attributable to~\citet{chakrabarty:2019,brand:2020}}. This algorithm of submodular function minimization has a runtime complexity of $O(n^{3} \cdot \log^{2}{n} \cdot \text{EO} + n^{4} \cdot \log^{O(1)}{n})$, where $\text{EO}$ denotes the time to evaluate the submodular function on any given subset. Implying that the overall runtime complexity of the method proposed by~\citet{maggiorano:2025} is given by $O(n^{7} \cdot \log^{2}{n} \cdot \text{EO} + n^{8} \cdot \log^{O(1)}{n})$, or as $\tilde{O}(n^{7}\cdot \text{EO}+n^{8})$ while ignoring logarithmic factors. Applying an even more favorable polynomial-time estimation of $\tilde{O}(n^{8})$ for their computational approach to test the nucleolus for convex games, this implies that an exponential-time algorithm with an upper bound given by $\tilde{O}(2^{n})$ is outperformed by their algorithm for convex games when $44$ players are involved. This is still too large a number for today's double-precision computers, so their approach cannot have any practical relevance. Even worse, when relying on the more unfavorable polynomial-time estimation of $O(n^{7} \cdot \log^{2}{n} \cdot \text{EO} + n^{8} \cdot \log^{O(1)}{n})$, let us expect that this turning point is achieved for their algorithm beyond the $51$-player threshold depending on a bad estimation of $\text{EO}$. This means that one can only expect that their algorithm outperforms an exponential-time algorithm with an upper bound of $O(2^{n})$ only for the case of quadruple-precision computers. Implying that their approach, though it has, of course, some theoretical flavor, has no practical value on today's double-precision computers, although they claim that their approach outperforms by a factor of $n^{3}$ the ellipsoid method proposed by~\citet{faiglekernkuip:01}. But this is also what one can expect from a simple test routine.

\section{Method by Indirect Function Representation}
\label{sec:mthdindfuc}

The Fenchel-Moreau-based approach of Algorithm~\ref{alg:01}, though originally conceived to compute a pre-kernel element, can, as a byproduct, also be applied to compute the pre-nucleolus for all game classes for which it is known that the pre-kernel is single-valued. That is to say, not only for convex games, but also for almost-convex games or veto-rich games, just to mention some important game classes with a single pre-kernel point. 

To delineate the essential features of Algorithm~\ref{alg:01}, we account for the fact that only $n(n-1)/2+1$ constraints are needed, instead of considering a linear problem with an exponentially long set of equations by the ellipsoid method. This drastically reduces the order of the constraint set from $O(2^{n})$ to $O(n^{2})$, so that the overdetermined system of equations can be reformulated into a continuous but non-differentiable objective function. Moreover, the domain of this objective function partitions into a finite set of payoff equivalence classes, i.e., convex regions. Within each of these equivalence classes, the complex, non-differentiable objective function that characterizes the pre-kernel simplifies into a regular quadratic and convex function, allowing us to iteratively solve a system of quadratic convex minimization problems associated with these classes, rather than a single, complex, non-convex problem. Thus, by means of payoff equivalence class, the original difficult problem becomes a finite sequence of well-behaved convex quadratic minimization problems. Establishing that the Fenchel-Moreau based approach provides a substantial simplification of the characterization of the pre-nucleolus for the class of games with a single-valued pre-kernel. At each iteration step, a solution vector of a quadratic optimization problem of type~\eqref{eq:objf2} is determined, allowing for the simultaneous diminishing of all differences in the excesses among the pair of players. Therefore, just a linear equation of the form $\mathbf{Q}\,\mathbf{x} = \mathbf{a}$ must be solved by infinite precision via the mapping of~\ref{eq:map_alg} with a polynomial-time worst-case runtime complexity of $O(n^{3})$.

Furthermore, under a regime of orthogonal projections, the search space is filtrated for updating the payoff vector such that convergence to a pre-kernel element occurs cycle-free. Hence, the application of orthogonal projections ensures that at each step, the updated payoff vector is moved toward a balanced state (where surpluses are equalized) within a subspace corresponding to the current equivalence class. The imposed filtration of the vector space of excess configurations leads to carrying out, instead of an infinite number of transfers, just a finite number of at most $(\binom{n}{2}-1)$ side-payments among the pair of players to successfully terminate the search process (cf.~\citet[Theorem 9.1.2]{mei:13}). Here, each iteration step on the payoff space is translated into the vector space of excess configuration while increasing the dimension of the search space at least by one. Typically, this process requires at most $(n+1)$ iterations for an $n$-player game, instead of the worst-case scenario of $\binom{n}{2}-1$ steps. Hence, a fast convergence is guaranteed, which is a significant improvement over iterative methods like that of Maschler's approximation method that can have unpredictable or slower convergence.  

By the Fenchel-Moreau-based approach to compute the pre-nucleolus of a game with a single-valued pre-kernel, one avoids relying on the least-core as a search space that contains the pre-nucleolus. Thereby, one can dispense with the ellipsoid method or submodular function minimization. Rather, the method focuses directly on the balancedness condition of the maximum surpluses condition, implying that no intermediate bound as a pre-nucleolus catcher is needed. It directly navigates toward a pre-kernel element without needing to first restrict the search to a smaller and bounded set. This implies that the approach is more efficient because it is a specialized, elegant, and highly optimized algorithm for the pre-kernel that can also be applied for the pre-nucleolus computation for the indicated game classes. It uses a dual characterization based on Fenchel-Moreau conjugation but applies it in a structured way that avoids the complexity and potential inefficiencies of more generic convex optimization methods of this type. Hence, this method is not a general-purpose sequential QP method. It is a highly tailored algorithm specifically for the pre-kernel computation. The power of this approach stems from the specific dual characterization, achieved through the indirect function as an increasing polyhedral convex function, which enables us to identify the structure of the pre-kernel.

Moreover, if it is similar to the work of~\citet{faiglekernkuip:01} assured that all maximum surpluses $s_{ij}(\mathbf{x},v)$ over all distinct pairs of players for any allocation $\mathbf{x} \in \mathbb{R}^{n}$ can be computed by a polynomial-time algorithm. Then, it is also possible to specify the set of lexicographically smallest most significant coalitions~\eqref{eq:sm_cidx} on the quotient space $dom\, h/\sim$ in polynomial-time. Since finding the smallest or largest item in an unsorted array is of linear time in the size of the array. Yet, the assumption made about the maximum surpluses conceals a crucial logical weakness, which will manifest itself as a logical deficiency once it proves to be invalid. Similar to the assumption introduced by~\citeauthor{faiglekernkuip:01}~to overcome the {\bfseries NP}-hardness issue of computing a pre-kernel point or the pre-nucleolus by their method. Nevertheless, we do not consider the assumption made on the maximum surpluses $s_{ij}(\mathbf{x},v)$ as critical. As the practical relevance of their computation was proven by our MATLAB software tool {\ttfamily MatTuGames} (cf.~\citet{mei:18}) for large numbers of $n$ up to $35$, exploiting successfully the physical limits of our computer resources with $3$ TB of main memory in a reasonable time.

Furthermore, no querying of an oracle in any form is necessary; this is already accomplished by the polynomial-time algorithm for determining all possible maximum surpluses $s_{ij}(\mathbf{x},v)$ at allocation $\mathbf{x}$. Therefore, there is absolutely no need to evaluate submodular functions or coalitional values by an oracle. In addition, the model is dealing with infinite precision instead of arbitrary precision. Therefore, neither any knowledge about the encoding length of the largest number in the model nor some knowledge about the total number of bits to encode the problem is required. To this end, what we consider a major advantage over the discussed algorithms from above is that no restriction to narrow and bound the search space by a pre-nucleolus catcher is needed. Thus, no additional structure of the problem needs to be evaluated or a separation oracle needs to be queried. It should be evident that exploiting too much structure of a problem must have negative side effects on the performance of an algorithm. In this context, there is also no need to exploit any form of combinatorial property like that of extended polymatroids, or a structural property of a game class like convexity. This is owed to the fact that the regime of orthogonal projection provides for the filtration of the search space naturally and bounds the number of iteration steps to converge to the pre-nucleolus. Then, Algorithm~\ref{alg:01} is reduced to an algorithm that sends, via the mapping~\eqref{eq:map_alg}, a point $\vec{\gamma}$ to a point $\vec{\gamma}_{\circ} \in M(h_{\vec{\gamma}})$, i.e., into the solution set of a quadratic optimization problem. This can be accomplished by solving a linear equation with a runtime complexity of $O(n^{3})$, which indicates a polynomial-time algorithm. 

To compare the algorithms, we have to recognize that the proposed method by~\citet{maggiorano:2025} might only be applicable for verifying the pre-nucleolus of convex games after implementing some decisive modifications. Obviously, their flawed approach does not even apply to TU games, which are not convex. But when focusing on convex games, we know that the pre-kernel coincides with the pre-nucleolus in this game class. Hence, computing the sole pre-kernel point, one has found the pre-nucleolus, whichever method one has used. Therefore, whenever one has found the pre-nucleolus for convex games by Algorithm~\ref{alg:01} solving a sequence of linear equations of type~\eqref{eq:map_alg} with a runtime complexity of $O(n^3)$ and at most $n+1$ iteration steps (cf.~Algorithm 8.1.1 of~\citet[Chapter 8]{mei:13}), one has scaled much better than solving an iterative process of linear programming problems for nothing having runtime complexity of $\tilde{O}(n^{7}\cdot \text{EO}+n^{8})$ with $2 \cdot n - 2$ iteration steps. In contrast to the above, this algorithm outperforms an exponential-time algorithm with an upper bound of $O(2^{n})$ when $10$ players are involved, which demonstrates even its practical relevance on modern computer architecture.

\section{Replication of the Pre-Nucleolus}
\label{sec:repprk}

As we mentioned above, the Fenchel-Moreau-based algorithm is even useful for the computation of the pre-nucleolus whenever it is known that the pre-kernel is a single point, i.e., it coincides with the pre-nucleolus. Even though it was conceived as a highly tailored algorithm specifically for the pre-kernel computation. At first glance, this seems to significantly limit the application of this method in determining the pre-nucleolus if we do not possess this kind of knowledge. However, we shall demonstrate that this method has a much broader flexibility than one might expect, as we resume with the replication results of~\citet{mei:23}. In this context, replication means showing that a specific solution point from one game can also be found as a solution in a different but related game, even when the parameters of the original game are modified.  

Using the above structure induced by the Fenchel-Moreau-based characterization of the pre-kernel, \citet{mei:23} could establish a major replication result for a single-valued pre-kernel. This is to say, for the game classes where the pre-kernel coincides with the pre-nucleolus of the game. Roughly speaking, this study explores how a specific point within the pre-imputation set, i.e., the pre-nucleolus of the game, can be replicated whenever the pre-kernel consists of a single point. Turned differently, it can be shown that this point is even the sole pre-kernel point and, therefore, the pre-nucleolus, in a different but related game under certain conditions. Note that such a related game need not have the same game properties as the original game, though it is established that it possesses the same single-valued pre-kernel. 

Apart from the single-valuedness of the pre-kernel, one of these specific conditions leverages that the payoff equivalence class that contains the single pre-kernel point must be full-dimensional, meaning that the payoff equivalence class has a non-empty interior. This non-empty interior condition of the unique pre-kernel point allows inscribing a full-dimensional ellipsoid within the payoff equivalence class from which a null space in the game space can be identified. Through these variation bounds, the range within the game space is specified that leaves the pre-kernel element invariant against the group action on the set of equivalent ordered bases. This means within this range, the game parameter can be varied without affecting the pre-kernel properties of this payoff vector. To put it differently, a variation of the game parameter within these bounds does not destroy the pre-kernel properties of the solution from the default game. Implying that these bounds specify a redistribution of the bargaining power among coalitions while supporting the selected pre-imputation still as a pre-kernel point. Even though the values of the maximum surpluses have been varied, the set of most effective coalitions remains unaltered by the parameter change. This indicates that a bargaining outcome related to this specific pre-kernel point remains stable against a variation in the game parameter space, and obstruction is held to account. Hence, a set of related games can be determined that are linearly independent and possess the selected pre-kernel element of the default game as well as a pre-kernel point. And a fortiori, as its pre-nucleolus.

By~\citet{mei:23}, these results have been generalized. There it was established that even on the convex hull comprising the default and related games in the game space, the pre-kernel must be a single point and is identical with the element specified by the default game. Furthermore, the pre-kernel correspondence restricted on this convex subset in the game space must be single-valued and consequently continuous. Hence, the nucleolus is an invariant (stable) solution on a subdomain of the game space following the pre-kernel fairness standards (axioms), and compliance is reality and obstruction held to account.

To deepen the understanding of compliance on non-binding agreements, we refer the reader to~\citet{mei:17f}, where, from a Cournot situation, four cooperative game models are discussed, each of which represents different aspiration levels of partners involved in a negotiation process of splitting the monopoly proceeds. The bargaining difficulties are demonstrated that might arise when agents are not acting self-constraint, and what consequences this imposes on the stability of a fair agreement. 

\begin{example}
  \label{exp:repnuc_cv}
To present for the reader a summary of the ease and the elegance of a Fenchel-Moreau-based approach for computing the pre-nucleolus of a game with a single-valued pre-kernel, we shall consolidate this knowledge in a manually derived step-by-step procedure for a game example that we have borrowed from the literature. Besides the pre-nucleolus computation, we also demonstrate the flexibility of this approach by discussing the associated replication results. By doing so, we consider a four-person convex game, which we have picked up from the work of~\citet{maggiorano:2025}. This game is specified by
\begin{equation*}
  v(S):=
  \begin{cases}
   3  & \text{if}\; S = \{1,2\},\{2,3\},\{2,3,4\}, \\  
   6  & \text{if}\; S = \{1,2,3\},\{1,2,4\};  \\
   10 & \text{if}\; S = N, \\
   0  & \text{otherwise},\\
  \end{cases}
\end{equation*}

\noindent for all $S \subseteq N$ with $N=\{1,2,3,4\}$. To visualize some results, in the Figure~\ref{fig:repnuc01}, we have plotted the core of game $v$ as a yellow polytope with its $11$ vertices in connection with the imputation set (skeleton-like triangle). Secondly, the least-core as a line segment, which is given by the convex hull $conv\,\{\{3,3,2,2\},\{2,4,2,2\}\}$, is included in the graphic. Thirdly, the Shapley value (blue enlarged dot), the modiclus (green enlarged dot), and the pre-nucleolus (red enlarged dot), which is identical to the nucleolus, are drawn. The least-core has an $\epsilon$-value of $-2$, which is indicated in the headline of the figure. Furthermore, all mentioned point solutions are core members. The nucleolus is in the center of the least-core, i.e., the line segment spanned by the extreme points of the least-core.\footnote{The figure has been generated with our Mathematica Package TuGames implemented within \citet{mei:10a}.}. 

\begin{figure}[ht!]
\centering
\ifpdf
    \includegraphics[height=12cm, width=12cm]{./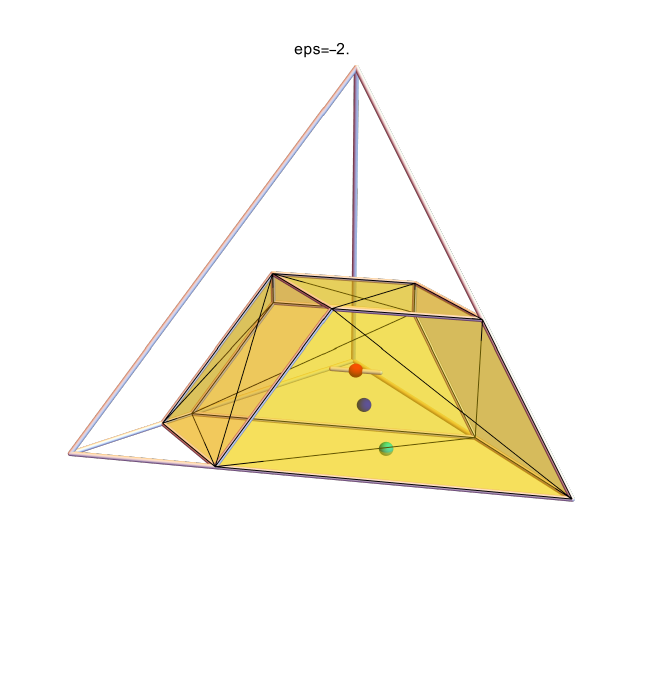} 
\else
    \includegraphics[height=9cm, width=12cm]{./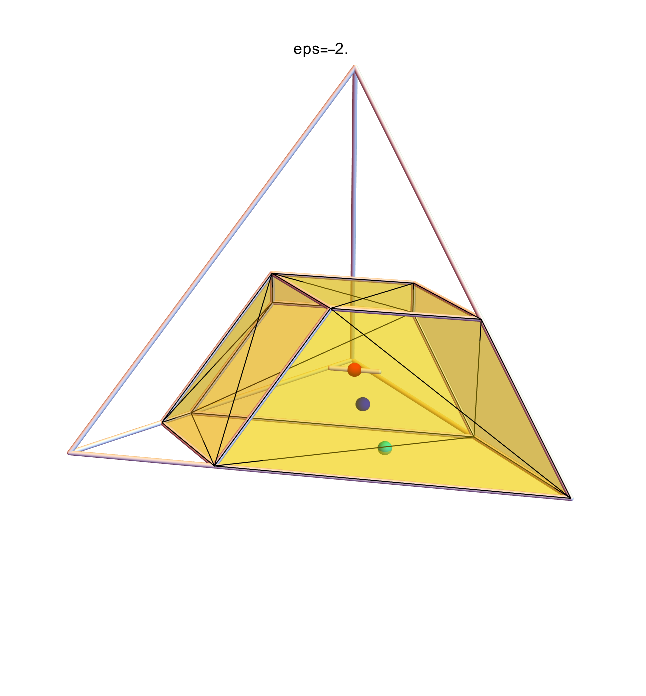} 
\fi
    \caption{The Imputation Set, Core, Least-Core, Nucleolus, Shapley value and Modiclus of Game $v$} 
    \label{fig:repnuc01}
\end{figure} 

\pagebreak

{\scriptsize
\begin{center}
\begin{ThreePartTable}
  \begin{TableNotes}
   \item[a] \label{tn:a} Pre-Kernel and Pre-Nucleolus: $(5/2,7/2,2,2)$
   \item[b] \label{tn:b} Modiclus: $\varsigma^{*}(N,v) = (5/2,5,5/2,0)$
   \item[c] \label{tn:c} Shapley Value: $\phi(N,v) = (11/4,17/4,7/4,5/4)$
   \item[d] \label{tn:d} CV: Convex Game.
   \item[e] \label{tn:e} SCV: Semi Convex Game.
   \item[f] \label{tn:f} CR: Game with non-empty Core.     
   \item[g] Note: Computation performed with MatTuGames (cf.~\citet{mei:18}).
  \end{TableNotes}
\begin{longtable}[c]{*{3}{cccccccccc}}
\caption[List of Games\tnote{f}~~which possess the same unique Pre-Kernel\tnote{a}~~as~$v$]{List of Games\tnote{f}~with the same single-valued Pre-Kernel (Nucleolus)\tnote{a}~~as~$v$} \label{grid_mlmmh} \\[.3em]

\hline \multicolumn{1}{c}{} &\multicolumn{1}{c}{} & \multicolumn{1}{c}{} & \multicolumn{1}{c}{} & \multicolumn{1}{c}{$\mu=-1.5$} & \multicolumn{1}{c}{} & \multicolumn{1}{c}{} & \multicolumn{1}{c}{} & \multicolumn{1}{c}{} \\ 
\multicolumn{1}{c}{Game} &\multicolumn{1}{c}{$\{1\}$} & \multicolumn{1}{c}{$\{2\}$} & \multicolumn{1}{c}{$\{1,2\}$} & \multicolumn{1}{c}{$\{3\}$} & \multicolumn{1}{c}{$\{1,3\}$} & \multicolumn{1}{c}{$\{2,3\}$} & \multicolumn{1}{c}{$\{1,2,3\}$} & \multicolumn{1}{c}{$\{4\}$} \\[.3em] \hline\hline 
\endfirsthead

\multicolumn{10}{c}%
{{\bfseries \tablename\ \thetable{} -- continued from previous page}} \\
\hline \multicolumn{1}{c}{} &
\multicolumn{1}{c}{} & 
\multicolumn{1}{c}{} & 
\multicolumn{1}{c}{} & 
\multicolumn{1}{c}{$\mu=-1.5$} & 
\multicolumn{1}{c}{} & 
\multicolumn{1}{c}{} & 
\multicolumn{1}{c}{} & 
\multicolumn{1}{c}{} \\
\multicolumn{1}{c}{Game} &
\multicolumn{1}{c}{$\{1,4\}$} &
\multicolumn{1}{c}{$\{2,4\}$} &
\multicolumn{1}{c}{$\{1,2,4\}$} &
\multicolumn{1}{c}{$\{3,4\}$} &
\multicolumn{1}{c}{$\{1,3,4\}$} &
\multicolumn{1}{c}{$\{2,3,4\}$} &
\multicolumn{1}{c}{$N$} &
\multicolumn{1}{c}{~CV~\tnote{d}} &
\multicolumn{1}{c}{~SCV~\tnote{e}} &
\multicolumn{1}{c}{~CR~\tnote{f}} & \\[.3em] \hline\hline 
\endhead

\hline \multicolumn{9}{r}{{Continued on next page}} \\ \hline\hline
\endfoot
\insertTableNotes\\
\hline \hline
\endlastfoot
$v$    &   0       &    0      &   3       &      0     &    0      &     3      &     6     &       0  \\
$v_{1}$ &  -16/27   &  -39/46   &  272/107   &   1/45    &  -38/83   &   65/27    &   271/45  &   1/45 \\
$v_{2}$ &  -21/79   &   15/29   &   125/31   &   3/47    &  -28/95   &   216/79   &   285/47  &   3/47 \\
$v_{3}$ &  -21/52   &   7/52    &  381/143   &  -23/32   &  -50/31   &   135/52   &   169/32  &   -23/32 \\
$v_{4}$ &  -13/43   &  14/139   &  169/64    &   -9/52   &  15/74    &   116/43   &   303/52  &   -9/52   \\
$v_{5}$ &  -13/43   &  14/139   &  169/64    &   -9/52   &  15/74    &   116/43   &   303/52  &   -9/52   \\
$v_{6}$ &  -13/43   &  14/139   &  169/64    &   -9/52   &  15/74    &   116/43   &   303/52  &   -9/52  \\
$v_{7}$ &  -6/61    &   2/61    &  273/97    &    4/23   &   1/48    &   177/61   &   142/23  &    4/23  \\
$v_{8}$ &  -6/61    &   2/61    &  273/97    &    4/23   &   1/48    &   177/61   &   142/23  &    4/23  \\
$v_{9}$ &  -6/61    &   2/61    &  273/97    &    4/23   &   1/48    &   177/61   &   142/23  &    4/23  \\
$v_{10}$ & -6/61    &   2/61    &  273/97    &    4/23   &   1/48    &   177/61   &   142/23  &    4/23   \\
\pagebreak
$v$     &    0      &    0       &    6     &    0       &      0     &     3     &  10  & Y & Y  & Y \\
$v_{1}$  &  -33/79   &   -35/52   &  271/45  &  -3/37     &   -47/88   &  317/136  &  10  & N & Y  & Y\\    
$v_{2}$  &  -83/152  &    9/38    &  285/47  &  23/65     &   -7/57    &  93/32    &  10  & N & Y  & Y \\    
$v_{3}$  &  -96/95   &  -17/36    &  169/32  &  -79/68    &   -237/142 &  94/37    &  10  & N & Y  & Y \\    
$v_{4}$  &  -61/45   &  29/53     &  303/52  &  -7/20     &    -6/7    &  91/29    &  10  & N & Y  & Y \\    
$v_{5}$  &   13/90   &  -20/21    &  303/52  &  -7/20     &    9/14    &  95/58    &  10  & N & Y  & Y \\    
$v_{6}$  &   13/90   &  29/53     &  303/52  &  -7/20     &    9/14    &  91/29    &  10  & N & Y  & Y \\    
$v_{7}$  &   4/29    &  25/93     &  142/23  &  -81/89    &    -19/25  &  123/58   &  10  & N & Y  & Y \\    
$v_{8}$  &   4/29    &  25/93     &  142/23  &  23/39     &    -19/25  &  105/292  &  10  & N & Y  & Y \\    
$v_{9}$  &   4/29    &  25/93     &  142/23  &  23/39     &    37/50   &  123/58   &  10  & N & Y  & Y \\    
$v_{10}$ &   4/29    &  25/93     &  142/23  &  23/39     &    37/50   &  105/29   &  10  & N & Y  & Y \\ \hline\hline
\label{tab:rpl_nuccvtug}
\end{longtable}
\end{ThreePartTable}
\end{center}
}

Further inspection reveals that this game has a veto-player, who is player two. A property that is much easier to test than convexity, especially for large values of $n$. Hence, apart from convexity, the game is even a veto-rich game, which gives an additional cause that the pre-kernel must coalesce with the pre-nucleolus (cf.~\citet{arinfelt:97}), whereas its outcome is given by the point $\nu(v) = \mathcal{PK}(v)= (5/2,7/2,2,2)$. Obviously, the set of lexicographically smallest most significant coalitions $\mathcal{S}(\nu(v))=\{\{1\},\{3\},\{4\},\{2,3\},\{1,2,3\},\{1,2,4\}\}$ is balanced with weights given by $\{1,1,2,1,1,1\}/3$. From this set, a boundary vector $\vec{b}=(5/2,2,2,11/2,8,8)$ is obtained by $\nu(v)(S)$ for $S \in \mathcal{S}(\nu(v))$. Define matrix $\mathbf{A}$ by $[\mathbf{1}_{S}]_{S \in \mathcal{S}(\nu(v))}$, then the solution of the system $\mathbf{A}\,\mathbf{x} = \vec{b}$ reproduces the pre-nucleolus. Moreover, this imputation is even an interior point of the payoff equivalence class composed by this collection. To see this, select an $\epsilon > 0$ sufficiently small and let $\mathbf{z} \in \mathbb{R}^{4}$ s.t. $z(N) = 0$; then one can establish that $\mathcal{S}(\nu(v) + \epsilon\,\mathbf{z}) = \mathcal{S}(\nu(v))$ holds. Thus, the non-empty interior condition is valid. Due to~\citet[Theorem 4.1]{mei:23}, a redistribution of the bargaining power among coalitions can be obtained while supporting the imputation $(5/2,7/2,2,2)$ still as a pre-kernel element for a set of related games.  In order to get a null space $\mathcal{N}_{\boldsymbol{\EuScript{W}}}$, we set the parameter $\mu$ to $-1.5$. In this case, the rank of matrix $\boldsymbol{\EuScript{W}}$ must be equal to $4$, and we could derive $10$ from at most $11$-linear independent games, which replicate the point $(5/2,7/2,2,2)$ as a pre-kernel element. Moreover,~\citet[Theorem 4.4]{mei:23} even states that this point is also the sole pre-kernel point; hence, the pre-kernel coincides with the pre-nucleolus for these games.

From Table~\ref{tab:rpl_nuccvtug}, we realize that the pre-nucleolus is stable against an induced variation in the game parameter. More precisely, the pre-nucleolus is invariant against a parameter variation within the specific interval of $\mu \in (-1.5, 1.4)$. To get an idea about the underlying bargain stability, we just mention here that the convex hull of the collection of games $\{v, \ldots, v_{10}\}$ form a subset in the game space -- the game setting has not changed, and with it not the figure of argumentation during the bargaining process; only the bargaining power of the coalitions has been changed without affecting the outcome of the game. Within this convex set, each TU game has exactly the above element as its sole pre-kernel point. This implies that we have identified a stable bargaining scenario where a settlement of an agreement is not problematic while referring to the principles of distributive justice (axioms) related to the pre-kernel. The agreement based on these principles remains stable after having varied the parameter set within the specified range. Therefore, the selected agreement point cannot be obstructed by the principles that describe the pre-kernel within this subset of the game space. Though the figure of argumentation has not referred to the norms of distributive arbitration based on the nucleolus, the nucleolus as a bargaining outcome is stable due to its coincidence with the pre-kernel, and the incentive to cooperate remains valid.

To illustrate how Algorithm~\ref{alg:01} works to compute the pre-nucleolus for games with a single-valued pre-kernel, let us focus on the pre-selected efficient payoff vector $\mathbf{y}_{0}=(10,0,0,0)^{\top}$ to see how we can apply this method for our specific example. By doing so, we look at the maximum surpluses for all pairs of players. For any pair of players $i,j \in N, i\neq j$, the maximum surplus of player $i$ over player $j$ with respect to any pre-imputation $\mathbf{x}$ is given by the maximum excess at $\mathbf{x}$ over the set of coalitions containing player $i$ but not player $j$. Thus,

\begin{equation*}  
  s_{ij}(\mathbf{x},v):= \max_{S \in \mathcal{G}_{ij}} e^{v}(S,\mathbf{x}) \qquad\text{where}\;  \mathcal{G}_{ij}:= \{S \;\arrowvert\; i \in S\; \text{and}\; j \notin S \}. 
\end{equation*} 

The expression $s_{ij}(\mathbf{x},v)$ describes the maximum amount at the pre-imputation $\mathbf{x}$ that player $i$ can gain without the cooperation of player $j$. 

From the excess vector $exc(\mathbf{y}_{0})$, we now get the subsequent set of lexicographically smallest coalitions for each pair of players

\begin{equation*}
\mathcal{S}(\mathbf{y}_{0}) = \{\{1\}, \{1,2,4\}, \{1, 2, 3\}, \{2\}, \{2,3\}, \{2,3\}, \{2,3\},\{2,3\}, \{3\}, \{2,3,4\},\{4\},\{4\}\},
\end{equation*}
whereas the order of the pairs of players in $\mathcal{S}(\mathbf{y}_{0})$ is given by $$\{(1,2),(1,3),(1,4),(2,3),(2,4),(3,4),(2,1),(3,1),(3,2),(4,1),(4,2),(4,3)\}.$$ 

To see how the selection process for the collection $\mathcal{S}(\mathbf{y}_{0})$ works, let us exemplarily consider the pair of players $(1,2)$. Focusing on this pair, we single out these coalitions that support the claim of player $1$ without counting on the cooperation of player $2$. These are the coalitions $\{\{1\},\{1,3\}, \{1,4\},\{1,3,4\}\}$ all having an excess of $-10$. Thus, all coalitions have maximum surplus, and the collection has a cardinality larger than $1$. If the collection has not a sole element, we have to apply two steps to find the lexicographically smallest coalition for the pair $(1,2)$. For the preceding case, both steps coincide, however. In order to proceed, we determine all the coalitions that have the smallest cardinality, which is coalition $\{1\}$. Since this set is unique, we have also determined the coalition that has a lexicographically minimum value in step one. Otherwise, at step two, we have to single out the lexicographically smallest coalition from an ambiguous collection. To observe how we have to proceed if this set is not unique, let us hypothetically assume for this purpose that the excess of coalition $\{1\}$ is smaller than $-10$. This implies that coalition $\{1\}$ is deleted from the list since it does not have maximum surplus. Hence, the collection has been modified to $\{\{1,3\},\{1,4\},\{1,3,4\}\}$. All of them have maximum surpluses, the smallest cardinality of a coalition from this collection is now $2$, and we single out the coalitions $\{\{1,3\},\{1,4\}\}$ and take finally the lexicographical minimum, which is $\{\{1,3\}\}$. 

For the reverse pair $(2,1)$, we find out that coalition $\{2,3\}$ supports best the claim of player $2$ without taking into account the cooperation of player $1$. Proceeding the same way for the remaining pairs, then we derive a matrix $\mathbf{E}$ by $\mathbf{E}_{ij}=\mathbf{1}_{S_{ji}} - \mathbf{1}_{S_{ij}}$ for each $i,j \in N, i < j$, and $\mathbf{E}_{0}=\mathbf{1}_{N}$. Notice that $\mathbf{1}_{S}:N \mapsto \{0,1\}$ is the characteristic vector given by $\mathbf{1}_{S}(k):=1$ if $k \in S$; otherwise, $\mathbf{1}_{S}(k):=0$. Then, matrix $\mathbf{E}$ is defined by 

\begin{equation*}
\mathbf{E} := [\mathbf{E}_{1,2}, \ldots,\mathbf{E}_{3,4},\mathbf{E}_{0}]  \in \mathbb{R}^{^{4 \times 7}}.
\end{equation*}

We realize that vector $\mathbf{E}_{1,2}$ is given by $(0,1,1,0)^{\top}-(1,0,0,0)^{\top}=(-1,1,1,0)^{\top}$ and $\mathbf{E}_{0}=(1,1,1,1)^{\top}$. A similar procedure applies to the remaining pairs of players; matrix $\mathbf{E}$ is quantified by
\begin{equation*}
\mathbf{E}=
\begin{bmatrix}[r]
  -1 & -1 & -1 &  0 &  0 &  0 &  1 \\
   1 &  0 &  0 & -1 & -1 & -1 &  1 \\
   1 &  1 &  0 &  1 & -1 & -1 &  1 \\
   0 & -1 &  1 &  0 &  1 &  1 &  1 
\end{bmatrix}.
\end{equation*}

A column vector $\mathbf{a}$ can be obtained by $\mathbf{E} \; \vec{\alpha} \in \mathbb{R}^{n}$, whereas the vector $\vec{\alpha}$ is given by $$\alpha_{ij} := (v(S_{ji}) - v(S_{ij})) \in \mathbb{R},$$ for all $i,j \in N, i < j $, and $\alpha_{0} := v(N)$. Therefore, vector $\vec{\alpha}$ is given by $$(3,-3,-3,0,-3,-3,10)^{\top}.$$.

From this matrix, we construct matrix $\mathbf{Q}$ by $\mathbf{E} \; \mathbf{E}^{\top}$. Inserting its numbers, matrix $\mathbf{Q}$ is specified by 

\begin{equation*}
\mathbf{Q}=
\begin{bmatrix}[r]
   4 &  0 & -1 &  1 \\
   0 &  5 &  3 & -1 \\
  -1 &  3 &  6 & -2 \\
   1 & -1 & -2 &  5  
\end{bmatrix}.
\end{equation*}
The column vector $\mathbf{a}$ is given by $(13,19,16,4)^{\top}$.

Solving this system of linear equations $\mathbf{Q}\;\mathbf{x} - \mathbf{a} = \mathbf{0}$, or alternatively $\mathbf{E}^{\top}\;\mathbf{x} - \vec{\alpha} = \mathbf{0}$, we get as a solution $\mathbf{y}_{1}=(128/37,98/37,91/37,60/37)^{\top}$, which is not a pre-imputation. Since the corresponding excess vector is given through
\begin{equation*}
  \begin{split}
   exc(\mathbf{y}_{1}) & =(0,-128/37,-98/37,-115/37,-91/37, -219/37, -78/37,-95/37,\\ & -60/37,-188/37,-158/37, -64/37,-151/37,-279/37,-138/37,-7/37). 
  \end{split}
\end{equation*}

We observe that the excess of the grand coalition is negative with $-7/37$ because of the violation of the efficiency property. Therefore, the maximum surpluses can not be balanced. Hence, we need at least an additional iteration step to complete.

For the second iteration step, we use the vector $\mathbf{y}_{1}=(128/37,98/37,91/37,60/37)^{\top}$ while applying the procedure from above to get matrix $\mathbf{E}$ by
\begin{equation*}
\mathbf{E}=
\begin{bmatrix}[r]
  -1 & -1 & -1 & -1 &  0 &  0 &  1 \\
   1 &  0 & -1 & -1 & -1 & -1 &  1 \\
   1 &  1 & -1 &  1 & -1 & -1 &  1 \\
   0 & -1 &  1 & -1 &  1 &  1 &  1  
\end{bmatrix}.
\end{equation*}
With $\vec{\alpha}$ given by $(3,-3,-6,-6,-3,-3,10)^{\top}$. Quantifying now matrix $\mathbf{Q}$ through

\begin{equation*}
\mathbf{Q}=
\begin{bmatrix}[r]
   5 &  2 & -1 &  2 \\
   2 &  6 &  4 & -1 \\
  -1 &  4 &  7 & -4 \\
   2 & -1 & -4 &  6  
\end{bmatrix}.
\end{equation*}
The column vector $\mathbf{a}$ is given by $(22,31,16,7)^{\top}$. Solving this system of linear equations, $\mathbf{Q}\;\mathbf{x} - \mathbf{a} = \mathbf{0}$, we get as a solution $\mathbf{y}_{2}=(329/127,423/127,255/127,279/127)^{\top}$. The corresponding excess vector is given through 
\begin{equation*}
  \begin{split}
    exc(\mathbf{y}_{2}) & =(0, -329/127,-423/127,-371/127,-255/127,-584/127,-297/127,-245/127,\\ & -279/127, -608/127,-702/127,-269/127,-534/127,-863/127,-576/127,-16/127).
  \end{split}
\end{equation*}

We can check out that the maximum surpluses are still not balanced. Hence, we need an additional step to finally converge to the pre-nucleolus. By doing so, we solve the system of linear equations specified by 

\begin{equation*}
  \mathbf{Q}\,\mathbf{x} = \mathbf{a} \Longrightarrow 
  \begin{bmatrix}[r]
   6 &  4 &  0 &  1 \\
   4 &  6 &  2 &  1 \\
   0 &  2 &  7 & -4 \\
   1 &  1 & -4 &  6    
  \end{bmatrix} \cdot
    \begin{bmatrix}
    x_{1} \\
    x_{2} \\
    x_{3} \\
    x_{4}
  \end{bmatrix} =
    \begin{bmatrix}
    31 \\
    37 \\
    13 \\
    10
  \end{bmatrix}
\end{equation*}
The solution vector of this system of linear equations is given by $\mathbf{y}_{3}=(5/2,7/2,2,2)^{\top}$. By the subsequent symmetric matrix below, we can easily check out that the maximum surpluses at $\mathbf{y}_{3}$ are balanced.
\begin{equation*}
\mathbf{MSurp}=
\begin{bmatrix}[r]

          0   &        -5/2    &     -2     &    -2 \\
         -5/2 &         0      &     -2     &    -2 \\
         -2   &         -2     &     0      &    -2 \\
         -2   &         -2     &     -2     &     0  
\end{bmatrix}.
\end{equation*}

Hence, we have found the sole pre-kernel element of the game, which is also the nucleolus, because of zero-monotonicity of convex games.

Notice that in this specific case, we needed only three iteration steps to complete, which is below the theoretical-expected upper bound of iteration steps since by Theorem 9.2.1 of~\citet[p.~222]{mei:13}, we have $\binom{4}{2}-1 = 6-1=5$ to expect. Notice, this method is applicable for any $n$-person TU game (cf.~\citet[Appendix A]{mei:13}) and has also been proven to be useful in finding an N-shaped pre-kernel (cf.~\citet{mei:14b}). \hfill$\Diamond$
\end{example}

\section{Summary Assessment of the Algorithms}
\label{sec:concrem}

The conducted assessment has established that the new combinatorial algorithm of~\citet{maggiorano:2025} does not offer an efficient method to compute the nucleolus of convex games. This is owed to the fact that the authors have incorrectly applied the RGP so that the algorithm cannot lead to a correct computation of the nucleolus. Instead of providing a combinatorial and strongly polynomial algorithm, offering an alternative to the less practical ellipsoid method used by~\citet{faiglekernkuip:01}, their approach is at best suitable in testing the correctness of a presumed nucleolus solution. But even this needs further crucial corrections on the proposed algorithm. Despite its possible usefulness in the verification of a computed nucleolus solution, their suggested method possesses the intrinsic defect that it relies on the least-core to catch the nucleolus of a convex game. This makes it necessary to impose a submodular function minimization approach in testing the feasibility of the least-core constraints in polynomial-time. Thereby resulting in a relatively high order of $\tilde{O}(n^{8})$ in its runtime complexity even when one of the best-known general SFM algorithms is taken as a basis. All this makes their proposed method impractical even when considering only testing purposes on today's double-precision computers.  

Though conceived as a highly tailored algorithm specifically for the pre-kernel computation, the Fenchel-Moreau-based algorithm invented by~\citet{mei:13} is also applicable for the pre-nucleolus computation for the specific game  classes with a single-valued pre-kernel. That is to say, when the pre-kernel and pre-nucleolus coincide, then this approach can be used to find this unique point. For this specific case, \citeauthor{mei:13}'s method with a runtime complexity of $O(n^3)$ is a more efficient approach than the general combinatorial testing algorithm for the nucleolus on convex games as proposed by~\citet{maggiorano:2025}. Unfortunately, although the Fenchel-Moreau-based method, while effective for all game classes with a single-valued pre-kernel, might not have been recognized as a general-purpose, robust algorithm for the pre-nucleolus in the broader theoretical literature.

This is incomprehensible, since, as shown by~\citet{mei:13,mei:17f,mei:23,mei:18c}, this approach offers many more benefits, such as providing a simple, tractable computation method for the pre-kernel and pre-nucleolus. Besides a dual pre-kernel characterization, it offers a vast theoretical advantage as it eases the analysis of the pre-kernel and related solutions considerably. For this, the study by~\citet{mei:23} was revisited using an example from the literature in this regard. The focus is on how variations in game parameters, specifically coalition values, affect the pre-kernel solution, and in particular, the pre-nucleolus of a convex game. It is demonstrated for this example that under the non-empty interior condition, an exposed pre-kernel element from a default game can be maintained as a pre-kernel solution in a related game, even when coalition values are altered. This replication result is achieved by identifying a null space within the game space, allowing for variations in coalition values without disrupting the pre-kernel properties of the solution. This invokes the following implications for the pre-nucleolus:

\begin{enumerate}
\item {\bfseries Replication of a Pre-Kernel Point}: If the pre-kernel of a TU game consists of a single point and, in addition, it belongs to a payoff equivalence class with a non-empty interior, then it is replicable as a pre-kernel solution in a related game. Hence, the pre-nucleolus of the default game was replicated.
\item {\bfseries Stability of Pre-Kernel Properties}: The replication allows for variations in game parameters (like coalitional values) without altering the fundamental pre-kernel properties of the selected solution, effectively supporting the pre-imputation as a pre-kernel point. Hence, the default pre-nucleolus of the original game is a stable bargaining outcome.
\item {\bfseries Continuity of the Pre-Kernel Correspondence}: It is established that the pre-kernel correspondence (the mapping from games to pre-kernel points) can be single-valued and continuous on restricted subsets of the game space. Hence, the pre-nucleolus solution is stable on the convex domain of related games.   
\end{enumerate}

Finally, for the example, it was demonstrated how this pre-nucleolus can be manually computed by a step-by-step procedure using the underlying structure of the Fenchel-Moreau conjugation approach.

\footnotesize
\bibliography{polytime_algo_v2}

@string{IJGT = "{International Journal of Game Theory}"}

@string{MOR = "{Mathematics of Operations Research}"}

@string{NAV = "{Naval Research Logistic Quarterly}"}

@string{OPT = "{Optimization}"}

@string{PR = "{Princeton}"}

@string{PUP = "{Princeton University Press}"}

@string{SIAM-A = "{SIAM, Journal of Applied Mathematics}"}

@string{Sp = "{Springer-Verlag}"}

@article{ow:74,
 URL = {https://doi.org/10.1007/BF01766395},
 abstract = {It was shown by Kohlberg [1972] that the nucleolus can be obtained by solving a linear program of extremely large size (2n! constraints). We show here how this program can be reduced to a more tractable size (4n constraints).},
 author = {Owen, G.},
 journal = {International Journal of Game Theory},
 number = {2},
 pages = {101--103},
 publisher = {Springer},
 title = "{A Note on the Nucleolus}",
 volume = {3},
 year = {1974}
}

@article{Traub:1982,
title = {Complexity of linear programming},
journal = {Operations Research Letters},
volume = {1},
number = {2},
pages = {59-62},
year = {1982},
issn = {0167-6377},
doi = {https://doi.org/10.1016/0167-6377(82)90047-5},
url = {https://www.sciencedirect.com/science/article/pii/0167637782900475},
author = {Traub, J. F. and Woźniakowski, H.},
keywords = {Computational complexity, models of computation, ellipsoid algorithm, linear programming, linear inequalities, polynomial-time algorithm},
abstract = {The complexity of linear programming is discussed in the “integer” and “real number” models of computation. Even though the integer model is widely used in theoretical computer science, the real number model is more useful for estimating an algorithm's running time in actual computation. Although the ellipsoid algorithm is a polynomial-time algorithm in the integer model, we prove that it has unbounded complexity in the real number model. We conjecture that there exists no polynomial-time algorithm for the linear inequalities problem in the real number model. We also conjecture that linear inequalities are strictly harder than linear equalities in all “reasonable” models of computation.}
}

@misc{lee:2015,
      title="{A Faster Cutting Plane Method and its Implications for Combinatorial and Convex Optimization}", 
      author={Lee, Y. T. and Sidford, A. and Wong, S. C.-W.},
      year={2015},
      eprint={1508.04874},
      archivePrefix={arXiv},
      primaryClass={cs.DS},
      url={https://arxiv.org/abs/1508.04874}, 
}

@misc{chakrabarty:2019,
      title="{Faster Matroid Intersection}", 
      author={Chakrabarty, D. and Lee, Y. T. and Sidford, A. and Singla, S. and Wong, S. C.-W. },
      year={2019},
      eprint={1911.10765},
      archivePrefix={arXiv},
      primaryClass={cs.DS},
      url={https://arxiv.org/abs/1911.10765}, 
}

@misc{brand:2020,
      title="{A Deterministic Linear Program Solver in Current Matrix Multiplication Time}", 
      author={van den Brand, J.},
      year={2020},
      eprint={1910.11957},
      archivePrefix={arXiv},
      primaryClass={cs.DS},
      url={https://arxiv.org/abs/1910.11957}, 
}

@misc{maggiorano:2025,
      title="{A Strongly Polynomial-Time Combinatorial Algorithm for the Nucleolus in Convex Games}", 
      author={Maggiorano, G. and Sosso, A. and Stauffer, G.},
      year={2025},
      eprint={2509.02380},
      archivePrefix={arXiv},
      primaryClass={cs.GT},
      url={https://arxiv.org/abs/2509.02380}, 
}

@ARTICLE{khachiyan:79,
  author={Khachiyan, L.}, 
  journal={Doklady Akademii Nauk SSR}, 
  title="{A polynomial time Algorithm in Linear Programming}", 
  year={1979},
  volume={244},
  pages={1093--1096},
  note={English Translation: Soviet Mathematics Doklady 20: 191--194}
}

@ARTICLE{Groetlov:81,
  author={Grötschel, M. and Lovász, L. and Schrijver, A.},
  journal={Combinatorica}, 
  title="{The Ellipsoid Method and its consequences in Combinatorial Optimization}", 
  year={1981},
  volume={1},
  pages={169–197},
  url={https://doi.org/10.1007/BF02579273},
  doi={10.1007/BF02579273}
}

@book{Groetlov:12,
	address = "Berlin",
	editor = "Grötschel, M. and Lovász, L. and Schrijver, A.",
	publisher = SP,
	title = "{Geometric Algorithms and Combinatorial Optimization}",
	edition = "2",
        url={https://doi.org/10.1007/978-3-642-78240-4},
        doi={10.1007/978-3-642-78240-4},	
	year = {2012}
}

@ARTICLE{faiglekernkuip:98,
  author={Faigle, U. and Kern, W. and Kuipers, J.},
  journal={International Journal of Game Theory}, 
  title="{Note: Computing the nucleolus of min-cost spanning tree games is NP-hard}", 
  year={1998},
  volume={27},
  number={3},
  pages={443–450},
  url={https://doi.org/10.1007/s001820050083},
  doi={10.1007/s001820050083}
}

@ARTICLE{faiglekernkuip:01,
  author={Faigle, U. and Kern, W. and Kuipers, J.},
  journal={International Journal of Game Theory}, 
  title="{On the Computation of the Nucleolus of a Cooperative Game}", 
  year={2001},
  volume={30},
  number={1},
  pages={79–98},
  url={https://doi.org/10.1007/s001820100065},
  doi={10.1007/s001820100065}
}

@article{SudPot:01,
	author = {Sudhölter, P. and Potters, J. A. M.},
	journal = "International Journal of Game Theory",
	pages = "117-139",
	title = "{The semireactive Bargaining Set of a Cooperative Game}",
	volume = 30,
	url = {https://doi.org/10.1007/s001820100068},
	doi = {10.1007/s001820100068},
	year = 2001
}

@article{MPSh:72,
	author = "Maschler, M. and Peleg, B. and Shapley, L.S.",
	journal = IJGT,
	pages = "73--93",
	title = "{The Kernel and Bargaining Set for Convex Games}",
	volume = 1,
	year = 1972
}

@article{MPSh:79,
	author = "Maschler, M. and Peleg, B. and Shapley, L. S.",
	journal = MOR,
	pages = "303--338",
	title = "{Geometric Properties of the Kernel, Nucleolus, and Related Solution Concepts}",
	volume = "4",
	year = "1979"
}

@inproceedings{Sob:75,
	address = "Vilnius",
	author = "Sobolev, A. J.",
	booktitle = "{Matematicheskie Metody v Sotsial'nykh Naukakh, Proceedings of the Seminar}",
	editor = "Vorobjev, N. N.",
	pages = "94-151",
	publisher = "Institute of Physics and Mathematics, Academy of Sciences of the Lithuanian SSR",
	title = "{The Characterization of Optimality Principles in Cooperative Games by Functional Equations}",
	note = "(in Russian, English summary)",
	year = 1975
}

@article{Pel:86b,
	author = "Peleg, B.",
	journal = IJGT,
	pages = "187-200",
	title = "{On the Reduced Game Property and its Converse}",
	volume = 15,
	year = 1986
}

@book{pel_sud:07,
	address = "Heidelberg",
	author = "Peleg, B. and Sudh{\"o}lter, P.",
	edition = "2",
	publisher = Sp,
	series = "{Theory and Decision Library: Series C}",
	title = "{Introduction to the Theory of Cooperative Games}",
	volume = "34",
	year = 2007
}

@book{Rocka:70,
	address = PR,
	author = "Rockafellar, R.",
	publisher = PUP,
	title = "{Convex Analysis}",
	year = 1970
}

@article{Shapley:71,
	author = "Shapley, L. S.",
	journal = IJGT,
	pages = "11--26",
	title = "{Cores of Convex Games,}",
	volume = 1,
	year = 1971
}

@article{davis:65,
	author = "Davis, M. and Maschler, M.",
	journal = NAV,
	pages = "223--259",
	title = "{The Kernel of a Cooperative Game}",
	volume = "12",
	year = "1965"
}

@article{wkern:03,
 ISSN = {0364765X, 15265471},
 URL = {http://www.jstor.org/stable/4127022},
 abstract = {A matching game is a cooperative game defined by a graph G = (N,E). The player set is N and the value of a coalition S ⊆ N is defined as the size of a maximum matching in the subgraph induced by S. We show that the nucleolus of such games can be computed efficiently. The result is based on an alternative characterization of the least core, which may be of independent interest. The general case of weighted matching games remains unsolved.},
 author = {Kern, W. and Paulusma, D.},
 journal = {Mathematics of Operations Research},
 number = {2},
 pages = {294--308},
 publisher = {INFORMS},
 title = {Matching Games: The Least Core and the Nucleolus},
 urldate = {2025-11-13},
 volume = {28},
 year = {2003}
}

@article{granotmasch:96,
 URL = {https://doi.org/10.1007/BF01247104},
 author = {Granot, D. and Maschler, M. and Owen, G. and Zhu, W.R.},
 journal = {International Journal of Game Theory},
 number = {3},
 pages = {219--244},
 title = "{The kernel/nucleolus of a standard tree game}",
 volume = {25},
 year = {1996}
}

@article{megiddo:78,
 URL = {https://doi.org/10.1287/moor.3.3.189},
 author = {Megiddo, N.},
 journal = {Mathematics of Operations Research},
 number = {3},
 pages = {189--196},
 title = "{Computational Complexity of the Game Theory Approach to Cost Allocation for a Tree}",
 volume = {3},
 year = {1978}
}

@article{kohl:72,
	author = "Kohlberg, E.",
	journal = SIAM-A,
	pages = "34--39",
	title = "{The Nucleolus as a Solution of a Minimization Problem}",
	volume = "23",
	year = "1972"
}

@techreport{kop:67,
	address = "The Hebrew University of Jerusalem",
	author = "Kopelowitz, A",
	institution = "{RM 31, Research Program in Game Theory and Mathematical Economics}",
	note = "mimeo",
	title = "{Computation of the Kernels of Simple Games and the Nucleolus of $N$-Person Games}",
	year = "1967"
}

@article{mart:96,
	author = "Martinez-Legaz, J-E.",
	journal = OPT,
	pages = "291--319",
	title = "{Dual Representation of Cooperative Games based on Fenchel-Moreau Conjugation}",
	volume = "36",
	year = "1996"
}

@software{mei:10a,
  author = "Meinhardt, H. I.",
  institution = "Karlsruhe Institute of Technology (KIT)",
  title = {{TuGames: A Mathematica Package for Cooperative Game Theory}},
  url = {https://github.com/himeinhardt/TuGames},
  version = {3.1.2},
  date = {2023-08-05},
  year = 2023
}

@software{mei:18,
  author = "Meinhardt, H. I.",
  institution = "Karlsruhe Institute of Technology (KIT)",
  title = {{MatTuGames: A Matlab Toolbox for Cooperative Game Theory}},
  url = {http://www.mathworks.com/matlabcentral/fileexchange/35933-mattugames},
  version = {1.9.2.1},
  date = {2025-02-15},
  year = 2025
}

@techreport{mei:18c,
	address = "Karlsruhe, Germany",
	author = "Meinhardt, H. I.",
	institution = "Karlsruhe Institute of Technology (KIT)",
	title = "{Analysis of Cooperative Games with Matlab and Mathematica}",
	note = "mimeo",
	year = 2023
}

@article{mei:23,
	author = "Meinhardt, H. I.",
	journal = "Computational Economics",
	pages = "871–946",
	title = "{On the Replication of the Pre-Kernel and Related Solutions}",
	volume = 64,
	issue = 2,
	url = "http://dx.doi.org/10.1007/s10614-023-10428-w",
	year = "2024",
	note = "Online available 2023"
}

@Inbook{mei:17f,
	author="Meinhardt, H. I.",
	editor="Mueller, David and Trost, Ralf",
	title="The Pre-Kernel as a Fair Division Rule for Some Cooperative Game Models",
	bookTitle="Game Theory in Management Accounting: Implementing Incentives and Fairness",
	year="2018",
	publisher="Springer International Publishing",
	address="Cham",
	pages="235--266",
	abstract="Rather than considering fairness as some private property or a subjective feeling of an individual, we study fairness on a set of principles (axioms) which describes the pre-kernel. Apart from its appealing axiomatic foundation, the pre-kernel also qualifies in accordance with the recent findings of                   Meinhardt, H.{\thinspace}I.                Meinhardt (The pre-kernel as a tractable solution for cooperative games: an exercise in algorithmic game theory. Springer, Berlin, 2013b) as an attractive fair division rule due to its ease of computation by solving iteratively systems of linear equations. To advance our understanding of compliance on non-binding agreements, we start our analysis with a Cournot situation to derive four cooperative game models well introduced in the literature, where each of it represents different aspiration levels of partners involved in a negotiation process of splitting the monopoly proceeds. In this respect, we demonstrate the bargaining difficulties that might arise when agents are not acting self-constraint, and what consequences this impose on the stability of a fair agreement.",
	isbn="978-3-319-61603-2",
	doi="10.1007/978-3-319-61603-2_11",
	url="https://doi.org/10.1007/978-3-319-61603-2_11"
}

@ARTICLE{mei:19,
       author = {Meinhardt, H. I.},
        title = "{Deduction Theorem: The Problematic Nature of Common Practice in Game Theory}",
      journal = {arXiv e-prints},
     keywords = {Computer Science - Artificial Intelligence},
         year = "2021",
        month = "Aug",
          eid = {arXiv:1908.00409},
        pages = {arXiv:1908.00409v2},
archivePrefix = {arXiv},
       eprint = {1908.00409},
 primaryClass = {cs.AI},
          url = {https://arxiv.org/abs/1908.00409v2},
       adsurl = {https://ui.adsabs.harvard.edu/abs/2019arXiv190800409M},
      adsnote = {Provided by the SAO/NASA Astrophysics Data System}
}

@techreport{mes:97,
	address = "Universitat Aut{\`o}noma de Barcelona",
	author = "Meseguer-Artola, A.",
	institution = "Departament d'Economia i d'Hist{\`o}ria Econ{\`o}mica",
	month = "Nov.",
	note = "mimeo",
	title = "{Using the Indirect Function to characterize the Kernel of a TU-Game}",
	year = "1997"
}

@article{stea:68,
	author = "Stearns, R.E.",
	journal = "Transaction of the American Mathematical Society",
	pages = "449--459",
	title = "{Convergent Transfer Schemes for $N$-Person Games}",
	volume = "134",
	year = "1968"
}

@book{mei:13,
	address = "Heidelberg/Berlin",
	author = "Meinhardt, H. I.",
	publisher = "Springer Publisher",
	series = "{Theory and Decision Library: Series C}",
	title = "{The Pre-Kernel as a Tractable Solution for Cooperative Games: An Exercise in Algorithmic Game Theory}",
	volume = "45",
	year = "2013",
	isbn="978-3-642-39548-2",
	doi="10.1007/978-3-642-39549-9",
	url="https://doi.org/10.1007/978-3-642-39549-9"
}

@article{arinfelt:97,
	author = "Arin, J. and Feltkamp, V.",
	journal = "International Journal of Game Theory",
	pages = "61--73",
	title = "{The Nucleolus and Kernel of Veto-Rich Transferable Utility Games}",
	volume = "26",
	year = "1997",
	url="https://doi.org/10.1007/BF01262513"
}

@article{solrag:94,
	author = "Solymosi, T. and Raghavan, T.E.S",
	journal = "International Journal of Game Theory",
	pages = "94--151",
	title = "{An Algorithm for Finding the Nucleolus of Assignment Games}",
	volume = "6",
	year = "1994"
}

@techreport{mei:14b,
	author = "Meinhardt, H. I.",
	institution = "Karlsruhe Institute of Technology (KIT)",
	month = "May",
	number = "MPRA-59365",
	organization = "Department of Economics",
	pages = "1--11",
	series = "{MPRA Paper}",
	title = "{A Note on the Computation of the Pre-Kernel for Permutation Games}",
	url = "http://mpra.ub.uni-muenchen.de/59365/",
	year = "2014"
}

@article{sol:14a,
	address = "Heidelberg/Berlin",
	author = "Solymosi, T.",
	doi = "10.1007/s10100-014-0342-y",
	issn = "1435-246X",
	journal = "Central European Journal of Operations Research",
	month = "March",
	pages = "1--15",
	publisher = "Springer Publisher",
	title = "{The Kernel is in the Least Core for Permutation Games}",
	url = "{http://dx.doi.org/10.1007/s10100-014-0342-y}",
	year = "2014"
}

@article{kido:08,
	author = "Kido, K.",
	journal = "Taiwanese Journal of Mathematics",
	pages = "1581--1590",
	title = "{A Modified Kohlberg Criterion and a Nonlinear Method to compute the Nucleolus of a Cooperative Game}",
	volume = "12",
	year = "2008"
}

@article{kido:05,
	author = "Kido, K.",
	journal = "Journal of Optimization Theory and Applications",
	pages = "577--589",
	title = "{Convergence Theorems for $l_p$-Norm Minimizers with respect to $p$}",
	volume = "125",
	year = "2005"
}

@inproceedings{kido:04,
	address = "Tokyo",
	author = "Kido, K.",
	booktitle = "{Proceedings of the International Conference on Nonlinear Analysis and Convex Analysis}",
	editor = "Takahashi, W. and Tanaka, T.",
	pages = "307--317",
	publisher = "Yokohama Publishers, Yokohama, Japan",
	title = "{A nonlinear Approximation of the Nucleolus}",
	year = "2004"
}

\end{document}